\newcommand{\subparagraph}{}
\newtheorem{define}{Definition}
\newtheorem{theorem}{Theorem}
\newtheorem{lemma}{Lemma}
\newtheorem{remark}{Remark}
\newtheorem{assume}{Assumption}
\newcommand{\Rc}{\mathcal R}
\newcommand{\R}{\mathbb R}
\newcommand{\N}{\mathcal{N}}
\newcommand{\D}{\mathcal{D}}
\newcommand{\V}{\mathcal{V}}
\newcommand{\G}{\mathcal{G}}
\newcommand{\E}{\mathcal{E}}
\newcommand{\Fc}{\mathcal{F}}
\newcommand{\Le}{\mathcal{L}}
\newcommand{\A}{\mathcal{A}}
\newcommand{\J}{\mathcal{J}}
\newcommand{\eps}{\epsilon}
\newcommand{\Z}{\mathbb{Z}}
\newcommand{\Pc}{\mathcal{P}}
\newcommand{\bmx}[1]{\begin{bmatrix}#1\end{bmatrix}} 
\newcommand{\pth}[1]{\left(#1\right)} 
\newcommand{\brc}[1]{\left \{#1\right \}} 
\newcommand{\nrm}[1]{\left \lVert#1\right \rVert} 
\newcommand{\bmxs}[1]{\begin{bsmallmatrix}#1\end{bsmallmatrix}}
\newcommand{\ggeq}{\succeq} 
\DeclarePairedDelimiter{\ceil}{\lceil}{\rceil}
\DeclarePairedDelimiter{\floor}{\lfloor}{\rfloor}
\DeclarePairedDelimiter{\abs}{\lvert}{\rvert}
\newcommand{\rarr}{\rightarrow} 
\let\oldceil\ceil
\def\ceil{\@ifstar{\oldceil}{\oldceil*}}
\let\oldfloor\floor
\def\floor{\@ifstar{\oldfloor}{\oldfloor*}}
\let\oldnorm\norm
\def\norm{\@ifstar{\oldnorm}{\oldnorm*}}
\let\oldabs\abs
\def\abs{\@ifstar{\oldabs}{\oldabs*}}
\newcommand{\limfy}[1]{\lim_{#1 \rarr \infty}}
\newcommand{\sign}{\text{sign}}
\newcommand{\cvx}{\text{co}}
\newcommand{\cvxc}{\overline{\text{co}}}
\newcommand{\xN}{x_{\N}}
\newcommand{\Lt}{\widetilde{\mathcal{L}}}
\newcommand{\fN}{f_{\N}}
\newcommand{\RN}{\R^{|\N|}}
\newcommand{\rv}{\color{black}}
\newtheorem{problem}{Problem}
\newtheorem{proposition}{Proposition}
\title{\LARGE \bf
Resilient Finite-Time Consensus: A Discontinuous Systems Perspective
}
\author{James Usevitch and Dimitra Panagou
\thanks{James Usevitch and Dimitra Panagou are with the Department of Aerospace Engineering, University of Michigan, Ann Arbor, MI 48109, USA.
        {\tt\small \{usevitch, dpanagou\}@umich.edu}. The authors would like to acknowledge the support of the Automotive Research Center (ARC) in accordance with Cooperative Agreement W56HZV-14-2-0001 U.S. Army TARDEC in Warren, MI, and of the Award No W911NF-17-1-0526.}%
}
\begin{document}

\maketitle
\thispagestyle{empty}
\pagestyle{empty}

\begin{abstract}

Many algorithms have been proposed in prior literature to guarantee resilient multi-agent consensus in the presence of adversarial attacks or faults. The majority of prior work present excellent results that focus on discrete-time or discretized continuous-time systems. Fewer authors have explored applying similar resilient techniques to continuous-time systems without discretization. 
These prior works typically consider asymptotic convergence and make assumptions such as continuity of adversarial signals, the existence of a dwell time between switching instances for the system dynamics, or the existence of trusted agents that do not misbehave.
In this paper, we expand the study of resilient continuous-time systems by removing many of these assumptions and {\rv using discontinuous systems theory to provide conditions for normally-behaving agents with nonlinear dynamics to achieve consensus in finite time despite the presence of adversarial agents.}
\end{abstract}

\section{Introduction}

Recent years have seen increased interest in multi-agent control systems that can accomplish desired control objectives despite the presence of adversarial or faulty agents. In particular, the resilient consensus problem has been the focus of much attention. Many approaches for discrete-time systems based on the family of Mean-Subsequence-Reduced (MSR) algorithms have been developed to ensure that normally-behaving agents in a multi-agent system can achieve consensus despite a bounded number of arbitrarily misbehaving agents \cite{leblanc2013resilient, dibaji2019resilient, dibaji2017quantized, dibaji2017resilient, mitra2019byzantine, zhang2012simple, saldana2017resilient}. MSR algorithms typically operate by having agents update their states with a trimmed mean of the local values received from their in-neighbors. Additional conditions on the network structure and the scope of the adversarial threat guarantee consensus of the normally-behaving agents. The majority of papers using MSR-based algorithms consider either discrete-time systems or discretized continuous systems.

Less attention has been devoted to studying counterparts of these MSR algorithms designed for continuous-time systems that are not discretized \cite{leblanc2013continuous, leblanc2017resilient, oksuz2019resilient, shang2019consensus, wu2017secure}. 
One of the difficulties in studying resilient consensus in the continuous-time domain with arbitrarily misbehaving adversaries is the issue of existence and uniqueness of system solutions that describe normal agents' state trajectories. 
{\rv
For example, guaranteeing existence and uniqueness of system solutions can become difficult when adversarial signals are discontinuous without a minimum dwell time between discontinuities.
}
In the seminal work \cite{leblanc2013continuous} the \emph{Adversarial Robust Consensus Protocol} (ARC-P) was presented, where continuous-time single-integrator agents apply a trimmed-mean approach to achieve resilient consensus. These results were extended in \cite{leblanc2017resilient} to more general LTI agents achieving state synchronization. A limiting assumption made in \cite{leblanc2013continuous,leblanc2017resilient} is that all signals sent from adversarial agents to normal agents are continuous in time. 
The authors of \cite{leblanc2013continuous} give reasonable justifications for this assumption, but their results have not yet been extended to more general adversarial signals that may exhibit discontinuities. 
A few prior works have made the assumption of minimum dwell time between instances where the system dynamics change due to filtering \cite{shang2019consensus, wu2017secure}. Nevertheless for many prior control algorithms it is possible to construct adversarial signals that cause infinite switching of system dynamics in a finite amount of time (which is demonstrated in Section \ref{sec:nonsmoothreview} of this paper). The works \cite{shang2019consensus, wu2017secure, oksuz2019resilient} do not discuss the possibility of discontinuous adversarial signals or the existence and uniqueness of system solutions.

Finite-time consensus is also a current topic of interest in the literature \cite{bhat2000finite, usevitch2018finite, garg2018new, liu2017finite}. Much of the prior literature on finite-time consensus assumes all agents apply the nominally specified controllers. There is relatively little work that treats finite-time consensus in the presence of adversarial agents. Some examples include the excellent results in \cite{franceschelli2014finite,franceschelli2017finite}. However, in \cite{franceschelli2014finite} it is assumed that only the \emph{initial conditions} of certain agents are faulty, with all agents applying the nominally specified control protocol. In contrast, Byzantine adversaries may apply an arbitrary control protocol at any instant subsequent to the initial time. In addition, \cite{franceschelli2017finite} considers only undirected graphs, assumes that there exists a safe set of trusted agents that never misbehave, and assumes that all misbehaving agents are only connected to trusted agents.

This work approaches the problem of resilient continuous-time consensus from a \emph{discontinuous systems} perspective \cite{cortes2008discontinuous} and relaxes many of the assumptions of prior literature. 
We present a novel nonlinear resilient control algorithm and conditions under which normally-behaving agents achieve consensus in finite time despite the presence of misbehaving adversarial agents.
The contributions of this paper can be summarized as follows:
\begin{itemize}
	\item We introduce a novel controller that guarantees finite-time consensus for a class of nonlinear systems in the presence of adversarial attacks and faults.
	\item We demonstrate using discontinuous systems theory that our analysis holds even for discontinuous adversarial signals with no minimum dwell time between discontinuities.
	\item We demonstrate that our analysis holds for the general $F$-local adversarial model on digraphs, which does not assume the presence of any trusted agents.
\end{itemize}

This paper is organized as follows: Section \ref{sec:notation} introduces the notation and problem formulation, Section \ref{sec:main} presents our main results, Section \ref{sec:simulations} gives simulations demonstrating our method, and Section \ref{sec:conclusion} gives a brief conclusion.

\section{Notation and Problem Formulation}
\label{sec:notation}

The sets of real numbers and integers are denoted $\R$ and $\Z$, respectively. The sets of nonnegative real numbers and integers are denoted $\R_+$ and $\Z_+$, respectively. 
The cardinality of a set $S$ is denoted as $|S|$, and the empty set is denoted $\emptyset$. 
The power set is denoted as $\Pc(S)$.
The convex hull of a set $S$ is denoted $\cvx\{S\}$, and the convex closure of a set $S$ is denoted $\cvxc\{S\}$.
{\rv The notations $B(x,\eps)$, $\bar{B}(x,\eps)$ denote the open and closed balls of radius $\eps > 0$ at $x \in \R^d$, respectively.}
The notations $\bm 1$ and $\bm 0$ denote the vector of all ones and the vector of all zeros, respectively, where the size of the vectors will be implied by the context.
The $i$th column of the identity matrix $I$ is denoted $  e^i$, with $I = \bmxs{ e^1 &  e^2 & \ldots &  e^n}$. 
A directed graph (digraph) is denoted as $\D = (\V,\E)$, where $\V = \{1,\ldots,n\}$ is the set of indexed vertices and $\E$ is the edge set. 
A directed edge is denoted $(i,j)$, with $i,j \in \V$, meaning that agent $j$ can receive information from agent $i$. The set of in-neighbors for an agent $j$ is denoted $\V_j = \{i \in \V : (i,j) \in \E \}$. The set of inclusive in-neighbors is defined as $\J_i = \V_i \cup \{i\}$. The set of out-neighbors for an agent $j$ is denoted $\V_j^{\text{out}} = \{i \in \V : (j,i) \in \E\}$. 
The sign function ($\sign : \R \rarr \R$) is defined as follows:
\begin{align}
	\sign(x) = \begin{cases}
		1 & \text{if } x > 0\\
		0 & \text{if } x = 0\\
		-1 & \text{if } x < 0
	\end{cases}, \hspace{1em} x \in \R
\end{align}

The notions of $r$-reachability and $r$-robustness will be used in this paper to quantify the graph theoretic conditions guaranteeding resilient consensus:
\begin{define}[\cite{leblanc2013resilient}]
\label{def:rreach}
Let $r \in \Z_+$ and $\D=(\V,\E)$ be a digraph. A nonempty subset $S \subset \V$ is $r$-reachable if $\exists i \in S$ such that $|\N_i \backslash S| \geq r$.
\end{define}
\begin{define}[\cite{leblanc2013resilient}]
\label{def:rrobust}
Let $r \in \Z_+$. A nonempty, nontrivial digraph $\D = (\V,\E)$ on $n$ nodes $(n \geq 2)$ is $r$-robust if for every pair of nonempty, disjoint subsets of $\V$, at least one of the subsets is $r$-reachable.
\end{define}

\subsection{Problem Formulation}
\label{sec:probform}

Consider a network of $n$ agents with $n \geq 2$ whose communication structure is modeled by the digraph $\D = (\V, \E)$. Without loss of generality we assume an initial time of $t_0 = 0$.
Each agent $i$ has a scalar state $x_i : \R \rarr \R$ and continuous-time first-order dynamics
\begin{align}
\label{eq:dynamics}
	\dot{x}_i(t) = u_i(t)
\end{align}
where the form of $u_i(t)$ will be given in Algorithm \ref{alg:FTRC}. 
At all times $t \geq 0$ each agent $i$ is able to send a signal to its out-neighbors containing a function of its state $g(x_i(t))$, where $g : \R \rarr \R$ is a strictly increasing function with domain equal to $\R$. The function $g(\cdot)$ is the same for all agents and is not required to be continuous.


\begin{define}
	The notation $g(x_j^i(t))$, $x_j^i : \R \rarr \R$, denotes the signal received by agent $i$ from agent $j$ at time $t$.
\end{define}

A \emph{normally-behaving} agent is defined as an agent $i$ that sends the function of its true state value $g(x_i(t))$ to all of its out-neighbors and updates its state according to the \emph{Finite-Time Resilient Consensus Protocol} (FTRC-P) defined in Algorithm \ref{alg:FTRC}. The set of all normal agents is denoted $\N \subset \V$.

\begin{algorithm}
\caption{\small{\textsc{FTRC Protocol (FTRC-P)}:}}
\label{alg:FTRC}
\begin{enumerate}
	\item At time $t$, each normal agent $i$ receives values $g(x_j^i(t))$ from its in-neighbors $j \in \V_i(t)$ and forms a sorted list.
	\item If there are less than $F$ values strictly larger than $i$'s own value $g(x_i(t))$, then $i$ removes all values which are strictly larger than its own. Otherwise $i$ removes precisely the largest $F$ values in the sorted list.
	\item In addition, if there are less than $F$ values strictly smaller than $i$'s own value $g(x_i(t))$, then $i$ removes all values which are strictly smaller than its own. Otherwise $i$ removes precisely the smallest $F$ values in the sorted list.
	\item Let $\Rc_i(t)$ denote the set of agents whose values are removed by agent $i$ in steps 2) and 3) at time $t$. Agent $i$ applies the following update:
	\begin{align}
\label{eq:uFTRC}
	u_i(t) = \alpha\, \sign\pth{\sum_{\J_i \backslash \Rc_i[t]} g(x_j^i(t)) - g(x_i(t)) }
\end{align}
where $\alpha > 0$ and $g: \R \rarr \R$ is defined in Section \ref{sec:probform}. Note that since $i \in \J_i$ by definition and agent $i$ never filters out the function of its own state $g(x_i(t))$, \eqref{eq:uFTRC} is always well-defined.
\end{enumerate}
\end{algorithm}


We consider the presence of misbehaving adversaries in this problem setting, which are defined as follows:
\begin{define}
\label{def:misbehaving}
	An agent $k \in \V$ is called \emph{misbehaving} if at least one of the following conditions holds:
	\begin{itemize}
		\item There exists $t \geq t_0$ such that $u_k(t)$ is not equal to the input \eqref{eq:uFTRC} defined by the FTRC Protocol in Algorithm \ref{alg:FTRC}.
		\item There exists $i \in \V_k^{\text{out}}$ and $t \geq t_0$ such that $g(x_k^i(t)) \neq g(x_k(t))$; i.e. agent $k$ sends an out-neighbor a different value than its actual state value.
		\item There exists $i_1, i_2 \in \V_k^{\text{out}}$ and $t \geq t_0$ such that $g(x_k^{i_1}(t)) \neq g(x_k^{i_2}(t))$; i.e. agent $k$ sends different values to different out-neighbors.
	\end{itemize}
The set of misbehaving agents is denoted $\A \subset \V$.
\end{define}

Note that the definition of misbehaving agents encompasses both \emph{Byzantine} adversaries \cite{leblanc2013continuous} and faulty agents.
All nodes in $\V$ are either normal or misbehaving; i.e. $\A \cap \N = \emptyset$ and $\A \cup \N = \V$.
{\rv 
The only assumption made on the signals $g(x_k^i(\cdot))$ originating from the adversaries is the following condition:
\begin{assume}
\label{assume:measurable}
	For any $k \in \A$ and $i \in \N$, the function $g \circ x_k^i$ is Lebesgue measurable.
\end{assume}

\begin{remark}
	Assumption \ref{assume:measurable} widens the class of adversarial signals that can be considered as compared to prior work. Prior work typically assumes that adversarial signals are continuous \cite{leblanc2013continuous, leblanc2017resilient} or have a finite number of discontinuities in any compact interval \cite{shang2019consensus, wu2017secure}. Under Assumption \ref{assume:measurable} however, the techniques in this paper consider adversarial signals which may be discontinuous and have possibly infinite discontinuities in a finite interval.

Naturally, Assumption \ref{assume:measurable} raises the question of what happens if one or more of the adversarial signals are not Lebesgue measurable. The answer to this question hinges upon whether there exist subsets of $\R$ which are not Lebesgue measurable, which in itself depends on which core axioms of mathematics are assumed to hold (e.g. the axiom of choice). Further discussion on this point is given in the Appendix in Section \ref{sec:discussionassume}.
\end{remark}
}

To quantify the number and distribution of misbehaving agents in the network, we will use the $F$-local model commonly employed in prior literature.
\begin{define}[\cite{leblanc2013resilient}]
A set $S \subset \V$ is $F$-local for $F \in \Z_+$ if it contains at most $F$ nodes in the neighborhood of the other nodes for all $t \geq 0$; i.e. $|\V_i \cap S| \leq F$ $\forall i \in \V \backslash S$, $\forall t \geq 0$.
\end{define}

Note that under the $F$-local model, no agents are assumed to be trusted, i.e. invulnerable to attacks or faults.


The objective of the normal agents is to achieve consensus in their state values despite the presence of an $F$-local adversarial set $\A$. We ultimately are not concerned with the trajectories of the adversarial agents' states--we are only concerned with ensuring that the actions of the adversarial agents do not prevent the consensus of the normal agents.
In this light, we define the vector of normal agents' states as follows:
\begin{align}
\label{eq:normalvector}
	\xN(t) = \bmx{x_{\N_1}(t) \\ x_{\N_2}(t) \\ \vdots \\ x_{\N_{|\N|}}(t)},\ \xN(t) \in \R^{|\N|},
\end{align}
where $\N_j$ is the index of the $j$th agent in $\N$ according to any arbitrary fixed ordering of $\N$, with $\{\N_1,\N_2, \ldots, \N_{|\N|} \} = \N$.
To give a brief example, in a network of $n = 5$ agents with the normal agents being $\{2,4,5\}$, we have $\N_1 = 2$, $\N_2 = 4$, and $\N_3 = 5$ with $\xN(t) = \bmxs{x_2(t) & x_4(t) & x_5(t)}^T$.
Consensus of the normal agents is achieved when $\xN(t) \in \text{span}(\bm 1)$. 
However, note by the form of \eqref{eq:uFTRC} that each $u_i(\cdot)$ is a function of both signals from normal agents \emph{and} signals from any adversarial agents that are in-neighbors of $i$. For all $i \in \N$, the vector of adversarial signals sent to $i$ at time $t$ is denoted $x_{\A}^{i} \in \R^{|\V_i \cap \A|}$. 
The dynamics of the normal agents are therefore written as follows:
\begin{align}
	\dot{ x}_{\N}(t) &= \bmx{u_{\N_1}( x_{\N}(t),  x_{\A}^{\N_1}(t))) \\ u_{\N_2}( x_{\N}(t),  x_{\A}^{\N_2}(t)) \\ \vdots \\ u_{\N_{|\N|}}( x_{\N}(t), x_{\A}^{\N_{|\N|}}(t))},  \nonumber\\
	&= f_{\N}(x_{\N}(t), x_{\A}^{\N}(t)), \label{eq:totalsystem}
\end{align}
where  $\{\N_1, \ldots, \N_{|\N|}\} = \N$ and
\begin{align}
\label{eq:alladversaries}
{
\medmuskip=0mu
\thinmuskip=0mu
\thickmuskip=0mu
x_{\A}^{\N}(t) = \bmx{(x_{\A}^{\N_1}(t))^T & \hspace{-.5em} \cdots \hspace{-.5em} & (x_{\A}^{\N_{|\N|}}(t))^T}^T \in \R^{\sum_{\N_j \in \N} |\V_{\N_j} \cap \A|}
}
\end{align}
is the vector of all adversarial signals at time $t$.
	By definition, the adversarial signals are arbitrary functions of time and in general will not be functions of the normal agent state vector $\xN(t)$. The adversarial signals in each vector $ x_{\A}^{\N_i}$ can therefore be viewed as arbitrary, possibly discontinuous inputs to the system of normal agents.


The objective of the normally-behaving agents is to achieve
\emph{Finite-Time Resilient Consensus} (FTRC). To define FTRC, we first introduce the following functions: 
\begin{align}
	M(\xN) &= \max_{i \in \N} x_i = \max_{j \in \{1,\ldots,|\N|\}}(e^j)^T \xN \nonumber \\
	m(\xN) &= \min_{i \in \N} x_i = \min_{j \in \{1,\ldots,|\N|\}}(e^j)^T \xN \nonumber \\
	V(\xN) &= M(\xN) - m(\xN) \label{eq:Lyapunov}
\end{align}
We also define the following sets to describe the agents with state values equal to $M(\xN)$ or $m(\xN)$:
\begin{align}
S_M &= \{i \in \N : x_i = M(\xN) \} \nonumber \\
S_m &= \{i \in \N : x_i = m(\xN) \} \label{eq:Smsets}
\end{align}


%
\begin{define}
\label{def:FTRC}
	The normal agents $i \in \N$ achieve Finite-Time Resilient Consensus (FTRC) if all of the following conditions hold:
	\begin{itemize}
		\item[(i)] $x_i(t) \in [m(\xN(0)), M(\xN(0))]$ for all $t \geq 0$ and for all $i \in \N$.
		\item[(ii)] $\exists T : \R^{|\N|} \rarr \R_+$ such that $V(\xN(t)) = 0$ for all $t \geq T(x_{\N}(0))$. {\rv Equivalently, $\xN(t) \in \text{span}(\bm 1)$ for all $t \geq T(\xN(0))$.}
	\end{itemize}
\end{define}
\begin{remark}
The notion of FTRC is based on the notion of \emph{Continuous-Time Resilient Asymptotic Consensus} (CTRAC) in \cite{leblanc2013continuous}, but imposes the stricter requirement that $V(\xN(t))$ converges \emph{exactly} to zero in a finite amount of time and remains there for all future time.
\end{remark}



\begin{problem}
\label{prob:FTRC}
	Determine conditions under which FTRC is achieved by the normal agents $i \in \N$ in the presence of a misbehaving subset of agents $\A \subset \V$.
\end{problem}


%
%
%

%

\subsection{Justification for Discontinuous Systems Approach}
\label{sec:nonsmoothreview}

This paper uses discontinuous systems theory and nonsmooth analysis to prove that a network of agents applying the FTRC-P achieves FTRC. There are two reasons for such an approach. First, the form of $u_i(\cdot)$ in \eqref{eq:uFTRC} implies that the right hand side (RHS) of \eqref{eq:totalsystem} is discontinuous. Note that we cannot simply assume a minimum ``dwell time" and treat the system as a switching system, since cleverly designed adversarial signals may induce an arbitrary number of discontinuities in any given time interval. To give a pathological example, suppose an agent $i \in \N$ receives an adversarial signal $x_k^i(t)$ from $k \in \A$ defined as follows:
\begin{align}
x_k^i(t) = \begin{cases}
a \in \R & \text{if } t \in \mathbb{I},\\
b \in \R,\ b \neq a & \text{if } t \in \mathbb{Q}
\end{cases}
\end{align}
where $a$ and $b$ are chosen appropriately, and $\mathbb{I}$ and $\mathbb{Q}$ represent the sets of irrational and rational numbers in $\R$, respectively. Both $\mathbb{I}$ and $\mathbb{Q}$ are dense in $\R$, implying that no positive minimum dwell time can be assumed for the system. 
The second reason for a discontinuous systems approach is that the Lyapunov-like candidate $V(\xN(t))$ from \eqref{eq:Lyapunov} which will be used for convergence analysis is nonsmooth in general. Discontinuous systems theory allows for nonsmoothness and discontinuities to be addressed in a mathematically precise manner while solving Problem \ref{prob:FTRC}.

\subsection{Review of Discontinuous Systems Theory}
\label{sec:Review}
This subsection gives a brief overview of several fundamental concepts from discontinuous systems theory that are relevant to this paper. The reader is referred to \cite{cortes2008discontinuous, clarke1990optimization, shevitz1994lyapunov, bacciotti2004nonsmooth} for more detailed information.

A differential inclusion is a system with dynamics
\begin{align}
\dot{x}(t) \in \mathcal{F}(t,x(t)), \label{eq:examplediffincl}
\end{align}
where $x : \R \rarr \R^d$ and $\Fc : \R^d \rarr \Pc(\R^d)$, where $\Pc(\R^d)$ denotes the power set of $\R^d$ as defined in Section \ref{sec:notation}. The set-valued map $\Fc$ indicates that at every time $t$ there can be multiple possible evolutions of the system state rather than just one. 
A Caratheodory solution of \eqref{eq:examplediffincl} defined on $[t_0,t_1] \subset [0,\infty)$ is an absolutely continuous function $x : [t_0,t_1] \rarr \R^d$ such that $\dot{x}(t) \in \Fc(t,x(t)$ for almost all $t \in [t_0,t_1]$ in the sense of Lebesgue measure. Existence of Caratheodory solutions to \eqref{eq:examplediffincl} is guaranteed by the following proposition:

\begin{proposition}[\cite{cortes2008discontinuous}]
\label{prop:S2}
Suppose the set-valued map $\mathcal{F} : [0,\infty) \times \R^d \rarr \Pc(\R^d)$ is locally bounded and takes nonempty, compact and convex values. Assume that, for each $t \in \R$, the set-valued map $x \mapsto \mathcal{F}(t,x)$ is upper semicontinuous, and for each $x \in \R^d$, the set-valued map $t \mapsto \mathcal{F}(t,x)$ is measurable. Then, for all $(t_0, x_0) \in [0,\infty) \times \R^d$ there exists a Caratheodory solution of \eqref{eq:examplediffincl} with initial condition $x(t_0) = x_0$.
\end{proposition}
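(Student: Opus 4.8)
The plan is to follow the classical existence theory for differential inclusions: establish a solution on a short interval and then extend it, using four ingredients -- (i) a uniform a priori bound from local boundedness, (ii) construction of approximate solutions via a measurable selection, (iii) extraction of a uniformly convergent limit by Arzel\`a--Ascoli, and (iv) a closure argument showing that the limit satisfies the inclusion.

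First I would localize. Fix $(t_0,x_0)$. By local boundedness there exist $a,b>0$ and $M>0$ with $\nrm{v}\le M$ for every $v\in\Fc(t,x)$ and every $(t,x)\in[t_0,t_0+a]\times\bar B(x_0,b)$. Setting $T=\min\{a,\,b/M\}$ guarantees that any arc with derivative bounded by $M$ started at $x_0$ remains in $\bar B(x_0,b)$ on $[t_0,t_0+T]$, so $\Fc$ is evaluated only on this compact set throughout the construction. Next I would construct approximate solutions: for each $k\in\Nat$ partition $[t_0,t_0+T]$ into $k$ equal subintervals and build an Euler-type polygonal arc $x_k$ by freezing the state at each subinterval's left endpoint $\tau$ and setting $\dot x_k(t)=v_k(t)$, where $v_k(\cdot)$ is a measurable selection of $t\mapsto\Fc(t,x_k(\tau))$. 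The existence of such a selection is exactly where the measurability of $t\mapsto\Fc(t,x)$ together with the nonempty, compact (hence closed) values is needed, via a measurable selection theorem of Kuratowski--Ryll-Nardzewski type. Each $x_k$ stays in $\bar B(x_0,b)$ and is Lipschitz with constant $M$, so $\{x_k\}$ is uniformly bounded and equicontinuous.

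By Arzel\`a--Ascoli, a subsequence (not relabeled) converges uniformly to an absolutely continuous $x$, and since $\{\dot x_k\}$ is bounded in $L^\infty$ it converges, along a further subsequence, weak-$*$ to $\dot x$. It then remains to verify $\dot x(t)\in\Fc(t,x(t))$ for almost every $t$, which I expect to be the main obstacle. Writing $y_k(t)$ for the frozen argument, one has $y_k(t)\to x(t)$ uniformly and $\dot x_k(t)\in\Fc(t,y_k(t))$, and here the hypotheses combine decisively: upper semicontinuity of $x\mapsto\Fc(t,x)$ controls the behaviour as $y_k(t)\to x(t)$, while convexity and compactness of the values allow Mazur's lemma to upgrade the weak limit to a strong limit of convex combinations, yielding the inclusion almost everywhere. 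This is precisely the standard convergence (closure) theorem for differential inclusions, and it is the step in which none of the four structural assumptions can be dropped.

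Finally, I would extend the local solution. Because local boundedness rules out finite-time escape on the relevant domain, the construction can be repeated from the endpoint and concatenated, producing a Carath\'eodory solution of \eqref{eq:examplediffincl} with $x(t_0)=x_0$ defined on the desired interval, which completes the argument. I note that in the paper this proposition is invoked as a cited result from \cite{cortes2008discontinuous}, so rather than reproving it one would verify that the right-hand side of \eqref{eq:totalsystem}, once passed through the Filippov or Krasovskii set-valued regularization, meets these four hypotheses.
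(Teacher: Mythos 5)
The paper does not actually prove this proposition: it is imported as a cited result from \cite{cortes2008discontinuous} (the classical existence theorem for differential inclusions of Filippov/Aubin--Cellina type), and the only proof obligation the paper takes on is to verify its hypotheses for the specific map $G[\fN]$, which is done in Lemma \ref{lem:formofG}. Your proposal therefore goes beyond what the paper does, by reproving the cited result, and the skeleton you give is the standard, correct one: localization via an a priori bound, Euler polygonal arcs driven by measurable selections (Kuratowski--Ryll-Nardzewski is indeed exactly where measurability of $t \mapsto \Fc(t,x)$ and closedness of values enter), Arzel\`a--Ascoli plus weak-$*$ compactness of the derivatives, and the closure/convergence step where upper semicontinuity in $x$ together with convexity and compactness of the values (via Mazur's lemma) yields $\dot x(t) \in \Fc(t,x(t))$ almost everywhere. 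What the paper's route buys is brevity and focus on the application; what yours buys is a self-contained account of why each of the four structural hypotheses is indispensable.

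Three caveats on your sketch. First, the paper's definition of local boundedness provides only an \emph{integrable} bound $m(\cdot)$ on $[t,t+\delta]$, not a constant $M$; your localization should replace the Lipschitz estimate $\nrm{x_k(t)-x_k(s)} \leq M\abs{t-s}$ by $\nrm{x_k(t)-x_k(s)} \leq \int_s^t m(\sigma)\, d\sigma$, with equicontinuity then following from absolute continuity of the integral of $m$; the rest of your argument is unaffected. Second, your closing claim that ``local boundedness rules out finite-time escape'' is false in general (the local bounds can grow with $x$), but it is also unnecessary: a Carath\'eodory solution in the sense used here need only exist on some interval $[t_0,t_1]$, so local existence suffices, and continuation is handled separately in the paper by Theorem \ref{thm:Filippov}. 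Third, your remark about passing the right-hand side of \eqref{eq:totalsystem} through a Filippov or Krasovskii regularization does not match the paper: the inclusion is built directly as the image of the admissible-input set under $f$, per Definition \ref{def:diffincl}, with the adversarial signals playing the role of the inputs; no regularization of a discontinuous vector field is invoked.
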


For convenience, the definitions of locally bounded, upper semicontinuity, and local Lipschitzness are given below.
\begin{define}[Locally bounded \cite{cortes2008discontinuous}]
	The set-valued map $\mathcal{F} : [t_0,\infty) \times \R^d \rarr \mathcal{P}(\R^d)$ is locally bounded at $(t,   x) \in [t_0, \infty) \times \R^d$ if there exist $\eps,\delta > 0$ and an integrable function $m : [t,t+\delta] \rarr (0,\infty)$ such that $\nrm{z}_2 \leq m(s)$ for all $z \in \mathcal{F}(s,  y)$, all $s \in [t, t+\delta]$, and all $  y \in B(  x, \eps)$ where $B(x,\eps)$ is the unit ball of radius $\eps$ centered at $x$.
\end{define}

\begin{define}[Upper semicontinuity \cite{cortes2008discontinuous}]
The time-invariant set-valued map $\mathcal{F} : \R^d \rarr \Pc(\R^d)$ is upper semicontinuous at $  x \in \R^d$ if for all $\eps > 0$ there exists $\delta > 0$ such that $\mathcal{F}(  y) \subseteq \mathcal{F}(  x) + B(  0,\eps)$ for all $y \in B(  x, \delta)$.
\end{define}

%
{\rv
\begin{define}[\cite{cortes2008discontinuous}]
The set-valued map $\mathcal{F} : [t_0,\infty) \times \R^d \rarr \mathcal{P}(\R^d)$ is locally Lipschitz at $x \in \R^d$ if there exists $L(x), \eps > 0$ such that $\mathcal{F}(y) \subset \mathcal{F}(z) + L(x)\nrm{y - z}_2 \bar{B}(0,1)$ for all $y,z \in B(x,\eps)$. Note that a set-valued map being locally Lipschitz implies that it is also upper semi-continuous \cite{cortes2008discontinuous}.
\end{define}
}

Existence intervals for Caratheodory solutions to \eqref{eq:examplediffincl} can be extended forward in time using the following result.

\begin{theorem}[\cite{filippov2013differential} Ch. 2 \S 7 Thm 4]
\label{thm:Filippov}
	Let $\Fc : \R^d \rarr \Pc(\R^d)$ satisfy the hypotheses of Proposition \ref{prop:S2} in a compact domain $D  \subset \R \times \R^d$, and be upper semicontinuous in $t$ and $x$ on $D$. Then each solution of \eqref{eq:examplediffincl} with $\bmxs{t_0 \\ x(t_0)} \in D$ can be continued in time until $\bmxs{t \\ x(t)}$ reaches the boundary of $D$.
\end{theorem}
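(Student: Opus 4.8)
The plan is to treat this as a standard maximal-solution continuation argument built on top of Proposition \ref{prop:S2}, which I would invoke rather than reprove. First I would fix an initial point $(t_0, x(t_0)) \in D$ and consider the collection of all Caratheodory solutions of \eqref{eq:examplediffincl} starting at this point whose graphs remain in $D$, partially ordered by extension: a solution $y$ on $[t_0, b')$ dominates a solution $x$ on $[t_0, b)$ if $b \le b'$ and $y|_{[t_0,b)} = x$. This set is nonempty by Proposition \ref{prop:S2}, and every totally ordered chain has an upper bound, namely the function defined on the union of the chain's domains (one checks this union is absolutely continuous on each compact subinterval and satisfies the inclusion almost everywhere, since each member does). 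Zorn's lemma then produces a maximal solution $x$ defined on a maximal interval $[t_0, T)$.

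The core step is to show the graph of this maximal solution approaches $\partial D$ as $t \rarr T^-$, and here I would exploit compactness of $D$. Since $\Fc$ is locally bounded at every point of the compact set $D$, a finite subcover yields a uniform bound $\nrm{z}_2 \le M$ for all $z \in \Fc(t,y)$ with $(t,y) \in D$. Hence $\nrm{\dot{x}(t)}_2 \le M$ for almost every $t \in [t_0, T)$, so $x$ is $M$-Lipschitz and uniformly continuous on $[t_0,T)$. The Cauchy criterion then guarantees the limit $x(T^-) := \lim_{t \rarr T^-} x(t)$ exists and that $x$ extends to an absolutely continuous function on the closed interval $[t_0, T]$ still satisfying the inclusion almost everywhere.

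I would finish by contradiction. Suppose the limit point $(T, x(T^-))$ lay in the interior of $D$ rather than on $\partial D$. Then the hypotheses of Proposition \ref{prop:S2} hold in a neighborhood of $(T, x(T^-))$ contained in $D$, so there is a Caratheodory solution issuing from $(T, x(T^-))$ on some interval $[T, T+\delta)$ whose graph stays in $D$. Concatenating this with the closed-up maximal solution on $[t_0, T]$ yields a solution on $[t_0, T+\delta)$ that strictly extends $x$, contradicting maximality. Hence $(t, x(t))$ must reach $\partial D$, which is the claim.

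I expect the main obstacle to be the closure (limit-of-solutions) facts on which both the chain bound and the endpoint extension rest: that the union over a chain and the one-sided limit at $T$ genuinely yield Caratheodory solutions, and that the inclusion $\dot{x}(t) \in \Fc(t,x(t))$ survives passage to the limit. This is exactly where the joint upper semicontinuity of $\Fc$ in $t$ and $x$ on $D$, together with its nonempty compact convex values, is required, since these hypotheses are what license passing to the limit in the inclusion via the standard convergence theorem for differential inclusions. The compactness of $D$ plays the complementary role of upgrading the pointwise local boundedness assumed in Proposition \ref{prop:S2} into the uniform Lipschitz estimate that makes the limit $x(T^-)$ exist at all.
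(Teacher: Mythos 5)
There is nothing in the paper to compare against here: the paper states this result as an imported theorem (Ch.~2, \S 7, Thm.~4 of Filippov \cite{filippov2013differential}) and gives no proof of it. Judged on its own, your argument is the classical continuation proof and is essentially sound: Zorn's lemma to obtain a maximal solution whose graph stays in $D$, a uniform velocity bound from compactness of $D$ to force existence of the one-sided limit at the right endpoint, closure of the solution property under that limit (this is indeed where joint upper semicontinuity and the nonempty compact convex values are needed), and a restart-from-the-interior contradiction via local existence. Two small wrinkles are worth fixing. First, with the paper's definition of local boundedness, a finite subcover of $D$ yields an \emph{integrable} bound $m(\cdot)$ rather than a constant $M$, so a priori the maximal solution is only absolutely continuous, not $M$-Lipschitz; the Cauchy criterion still applies since $\nrm{x(t)-x(s)}_2 \le \int_s^t m(\tau)\,d\tau \rightarrow 0$ by absolute continuity of the Lebesgue integral, or alternatively you can note that a jointly upper semicontinuous map with compact values maps the compact set $D$ into a bounded set, which does give a genuine constant bound and rescues the Lipschitz claim. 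Second, your final dichotomy (interior versus boundary) silently uses that $D$ is closed, so that the limit point $(T, x(T^-))$ actually lies in $D$; and since Proposition~\ref{prop:S2} is stated for maps defined on all of $[0,\infty)\times\R^d$, the restart step needs the standard local version (restrict $\Fc$ to a closed ball inside $D$, or extend it off $D$), together with the speed bound to keep the restarted solution inside that neighborhood for small time. Both points are routine patches, not gaps in the structure of the argument.
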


Although there are multiple ways to define set-valued maps, the following method will be used in this paper.
\begin{define}[\cite{cortes2008discontinuous}]
\label{def:diffincl}
	Let $f : \R^d \times \mathcal{U} \rarr \R$, where $\mathcal{U} \subset \R^m$ is the set of allowable control inputs, and let $u : \R \rarr \mathcal{U}$ be a control signal. Consider the function
$\dot{ x}(t) = f( x(t),   u(t)),\   u(t) \in \mathcal{U}$.
	The set-valued map $G[  f] : \R^d \rarr \Pc(\R^d)$ is defined as
	\begin{align}
		G[f](  x) \triangleq \brc{f(  x,  u) :   u \in \mathcal{U}}.
	\end{align}
\end{define}

The notion of generalized gradient extends the notion of gradient to locally Lipschitz functions that may not be continuously differentiable everywhere.

\begin{define}[Generalized Gradient \cite{clarke1990optimization, shevitz1994lyapunov}]
	Let $V : \R^d \rarr \R$ be a locally Lipschitz function \cite[Sec. 3.1]{khalil2002nonlinear}, and let $\Omega_V \subset \R^d$ denote the set of points where $V$ fails to be differentiable,\footnote{Note that by Rademacher's Theorem, a locally Lipschitz function is differentiable almost everywhere in the sense of Lebesgue measure \cite[Sec. 1.2]{clarke1990optimization}.} and let $S \subset \R^d$ denote any other set of measure zero. The \emph{generalized gradient} $\partial V : \R^d \rarr \Pc(\R^d)$ of $V$ is defined as
	\begin{align}
		\partial V(  x) = \cvx \brc{\limfy{i} \nabla V(  x^i) :   x^i \rarr   x,\   x^i \notin \Omega_V \cup S}	
	\end{align}
\end{define}

Computing generalized gradients can be difficult in general. However several useful results exist in the literature that facilitate this calculation, including the following one.

\begin{proposition}[\cite{cortes2008discontinuous}]
\label{prop:maxfunc}
	For $k \in \{1,\ldots,m \}$, let $g_k : \R^d \rarr \R$ be locally Lipschitz at $x \in \R^d$, and define the functions $g_{\max} : \R^d \rarr \R$ and $g_{\min} : \R^d \rarr \R$ as
	\begin{align}
		g_{\max}(y) &\triangleq \max\{g_k(y) : k \in \{1,\ldots,m \} \} \\
		g_{\min}(y) & \triangleq \min\{g_k(y) : k \in \{1,\ldots,m\} \}
	\end{align}
	Then all of the following statements hold:
	\begin{enumerate}
		\item $f_{\max}$ and $f_{\min}$ are locally Lipschitz at $x$
		\item Let $I_{\max}(x)$ denote the set of indices $k$ for which $g_k(x) = g_{\max}(x)$. Then the function $g_{\max}$ is locally Lipschitz at $x$, and
	\begin{align}
	\label{eq:gmax}
		\partial g_{\max} \subseteq \text{co} \bigcup \{\partial g_i(x) : i \in I_{\max}(x) \}.
	\end{align}
	Furthermore, if $g_i$ is regular\footnote{The precise definition of \emph{regular functions} can be found in \cite[Defn. 2.3.4]{clarke1990optimization} and \cite{cortes2008discontinuous}. Notably, all convex functions are regular \cite[Prop. 2.3.6]{clarke1990optimization}.} at $x$ for all $i \in I_{\max}(x)$, then equality holds in \eqref{eq:gmax} and $g_{\max}$ is regular at $x$.
	\item Let $I_{\min}(x)$ denote the set of indices $k$ for which $g_k(x) = g_{\min}(x)$. Then the function $g_{\min}$ is locally Lipschitz at $x$, and
	\begin{align}
	\label{eq:gmin}
		\partial g_{\min} \subseteq \text{co} \bigcup \{\partial g_i(x) : i \in I_{\min}(x) \}.
	\end{align}
	Furthermore, if $-g_i$ is regular at $x$ for all $i \in I_{\min}(x)$, then equality holds in \eqref{eq:gmax} and $-g_{\min}$ is regular at $x$.
	\end{enumerate}
	
\end{proposition}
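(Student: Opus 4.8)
The plan is to prove the three claims in sequence and to deduce the minimum case from the maximum case at the end; throughout I set $\Omega \triangleq \Omega_{g_{\max}} \cup \bigcup_{k=1}^m \Omega_{g_k}$, which has Lebesgue measure zero by Rademacher's Theorem applied to each locally Lipschitz $g_k$ (and to $g_{\max}$). Claim (1) is routine: choosing a common radius $\eps$ and constant $L = \max_k L_k$ from the local Lipschitz data of the $g_k$, for $y,z \in B(x,\eps)$ I pick an index $j$ with $g_{\max}(y) = g_j(y)$ and estimate $g_{\max}(y) - g_{\max}(z) \le g_j(y) - g_j(z) \le L\nrm{y-z}_2$; swapping $y$ and $z$ gives $\abs{g_{\max}(y) - g_{\max}(z)} \le L\nrm{y-z}_2$, and the identical bound holds for $g_{\min}$.

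The core of the argument is the inclusion \eqref{eq:gmax}, which I would derive directly from the limiting-gradient definition of $\partial g_{\max}$. Take any generating limit $\zeta = \limfy{i} \nabla g_{\max}(x^i)$ with $x^i \rarr x$ and $x^i \notin \Omega$. Each $x^i$ has at least one active index, and since there are finitely many indices, the pigeonhole principle yields a fixed $j$ with $g_j(x^i) = g_{\max}(x^i)$ along a subsequence. Because $g_{\max} - g_j \ge 0$ everywhere with equality at $x^i$, the point $x^i$ minimizes $g_{\max} - g_j$, and since both functions are differentiable there ($x^i \notin \Omega$) the first-order condition gives $\nabla g_{\max}(x^i) = \nabla g_j(x^i)$. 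Hence $\nabla g_j(x^i) \rarr \zeta$, so $\zeta \in \partial g_j(x)$, while continuity of $g_j$ forces $g_j(x) = g_{\max}(x)$, i.e.\ $j \in I_{\max}(x)$. Every generating limit therefore lies in $\bigcup\{\partial g_i(x) : i \in I_{\max}(x)\}$, and taking convex hulls establishes \eqref{eq:gmax}.

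For the equality and regularity statement I would pass to directional derivatives. By continuity, the active set of $g_{\max}$ at $x + tv$ is contained in $I_{\max}(x)$ for small $t > 0$, which gives $g_{\max}'(x;v) = \max_{i \in I_{\max}(x)} g_i'(x;v)$ whenever the one-sided derivatives $g_i'(x;v)$ exist. Using regularity of each active $g_i$ (so that $g_i'(x;v) = g_i^\circ(x;v)$), the duality $g^\circ(x;v) = \max_{\zeta \in \partial g(x)} \langle \zeta, v\rangle$ between the generalized gradient and the Clarke generalized directional derivative, and the inclusion just proved, I would chain $g_{\max}^\circ(x;v) \le \max_{i \in I_{\max}(x)} g_i^\circ(x;v) = \max_{i \in I_{\max}(x)} g_i'(x;v) = g_{\max}'(x;v) \le g_{\max}^\circ(x;v)$. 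Equality throughout proves that $g_{\max}$ is regular at $x$ and shows that $\partial g_{\max}(x)$ and the compact convex set $\cvx \bigcup\{\partial g_i(x) : i \in I_{\max}(x)\}$ have the same support function, hence coincide.

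Claim (3) then follows by applying claims (1)--(2) to the functions $-g_k$, since $g_{\min} = -\max_k(-g_k)$, the active set $I_{\min}(x)$ of the $g_k$ equals the active set of the $-g_k$, the hypothesis that $-g_i$ is regular is exactly what the max case requires, and $\partial(-f) = -\partial f$. I expect the inclusion to be the main obstacle: the delicate points are justifying $\nabla g_{\max}(x^i) = \nabla g_j(x^i)$ through the local-minimum/first-order condition and confirming that the surviving index $j$ remains active in the limit, both of which rely on carefully tracking the measure-zero exclusion set $\Omega$ so that all relevant gradients exist at the approximating points.
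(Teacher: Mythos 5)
The paper itself contains no proof of Proposition~\ref{prop:maxfunc}: it is imported, with citation, from \cite{cortes2008discontinuous} (and traces back to Clarke \cite[Prop.~2.3.12]{clarke1990optimization}), so there is no in-paper argument to compare against and your proposal must be judged on its own. It is correct, and it takes a genuinely different route from the standard textbook proof. Clarke's argument for the inclusion \eqref{eq:gmax} never touches limiting gradients: it bounds $g_{\max}^\circ(x;v) \le \max_{i \in I_{\max}(x)} g_i^\circ(x;v)$ directly from the definition of the generalized directional derivative (using that the active set at points near $x$ collapses into $I_{\max}(x)$), then concludes via the support-function characterization of $\partial g_{\max}(x)$. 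You instead work with the Rademacher-based, limiting-gradient definition of $\partial$ --- precisely the definition this paper states --- using the pigeonhole subsequence together with the first-order-condition observation that each $x^i$ globally minimizes $g_{\max}-g_j$, forcing $\nabla g_{\max}(x^i)=\nabla g_j(x^i)$; this is sound (your exclusion set $\Omega$ has measure zero, and limits along your restricted sequences are legitimate generating limits for $\partial g_j(x)$, with continuity guaranteeing $j \in I_{\max}(x)$). What your route buys is a proof matched to the paper's own definition of the generalized gradient; what it costs is that your equality/regularity step still invokes the duality $g^\circ(x;v)=\max\{\langle\zeta,v\rangle : \zeta \in \partial g(x)\}$, which under the limiting-gradient definition is itself a nontrivial theorem (Clarke's Thm.~2.5.1) and should be flagged as an imported fact rather than something available from the paper's definitions alone. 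The remaining steps --- the common Lipschitz constant in claim (1), the commutation of a finite max with one-sided limits, the identification of two compact convex sets with equal support functions, and the reduction of claim (3) to claim (2) via $g_{\min}=-\max_k(-g_k)$ and $\partial(-f)=-\partial f$ --- are all correct.
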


The \emph{set-valued Lie derivative} is used to analyze the stability of differential inclusions:

\begin{define}[\cite{cortes2008discontinuous,franceschelli2017finite}]
\label{def:setvaluedLie}
	Given a locally Lipschitz function $V : \R^d \rarr \R$ and a set-valued map $\mathcal{F} : \R^d \rarr \Pc(\R^d)$, the \emph{set-valued Lie derivative} $\widetilde{\Le}_{\mathcal{F}} V : \R^d \rarr \Pc(\R^d)$ of $V$ with respect to (w.r.t.) $\Fc$ at $  x$ is defined as 
	\begin{align}
		\widetilde{\Le}_\Fc V(  x) = \{a \in \R : \exists   v \in \Fc(  x) \text{ such that }   \zeta^T   v = a \nonumber \\
		\text{for all }   \zeta \in \partial V(  x) \} 	
	\end{align}
\end{define}

Given a locally Lipschitz and regular function $f$ and a Caratheodory solution $x(t)$ of \eqref{eq:examplediffincl}, the following result describes properties of the time derivative of the composition $f(x(t))$.

\begin{proposition}[\cite{cortes2008discontinuous,
franceschelli2017finite}]
\label{prop:Vdot}
	Let $  x : [0, t_1] \rarr \R^d$ be a solution of the differential inclusion \eqref{eq:examplediffincl} with $\Fc(\cdot)$ satisfying the hypotheses of Proposition \ref{prop:S2}, and let $h: \R^d \rarr \R$ be locally Lipschitz and regular. Then the composition $t \mapsto h(  x(t))$ is differentiable at almost all $t \in [t_0,t_1]$, and the derivative of $t \mapsto h(  x(t))$ satisfies
	\begin{align}
		\frac{d}{dt}(h(  x(t))) \in \widetilde{\Le}_\Fc h(  x(t)) 
	\end{align}
	for almost every $t \in [0,t_1]$.
\end{proposition}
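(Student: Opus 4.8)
The plan is to combine the absolute continuity of Caratheodory solutions with a chain rule for the generalized gradient, using the regularity of $h$ to collapse the derivative to a single value that is realized by \emph{every} element of $\partial h$. First I would record the two almost-everywhere differentiability facts. Since $x(\cdot)$ is a Caratheodory solution it is absolutely continuous on $[0,t_1]$, so $\dot x(t)$ exists and satisfies $\dot x(t) \in \Fc(t,x(t))$ for almost every $t$. The image $x([0,t_1])$ is compact (continuous image of a compact set), so by local Lipschitzness $h$ is Lipschitz on a neighborhood of it; hence $t \mapsto h(x(t))$ is the composition of a Lipschitz function with an absolutely continuous one and is itself absolutely continuous, so $\frac{d}{dt}h(x(t))$ exists for almost every $t$. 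Let $\Tc \subseteq [0,t_1]$ be the full-measure set on which both derivatives exist and $\dot x(t) \in \Fc(t,x(t))$.

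The heart of the argument is to show that for every $t \in \Tc$,
\[
\frac{d}{dt}h(x(t)) = \zeta^T \dot x(t) \quad \text{for all } \zeta \in \partial h(x(t)).
\]
To do this I would invoke Clarke's chain rule \cite{clarke1990optimization}: if $x$ is differentiable at $t$ and $h$ is locally Lipschitz near $x(t)$, then the scalar function $F := h \circ x$ satisfies $\partial F(t) \subseteq \{\zeta^T \dot x(t) : \zeta \in \partial h(x(t))\}$, and when $h$ is \emph{regular} at $x(t)$ this inclusion becomes an equality and $F$ is itself regular at $t$. Regularity of $h$ is precisely the standing hypothesis, so both conclusions hold. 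On the other hand, a regular scalar function that is (classically) differentiable at $t$ has generalized gradient equal to the singleton $\partial F(t) = \{\dot F(t)\}$. Equating the two descriptions of $\partial F(t)$ forces the set $\{\zeta^T \dot x(t) : \zeta \in \partial h(x(t))\}$ to collapse to the single value $\dot F(t) = \frac{d}{dt}h(x(t))$, which is exactly the displayed identity.

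Finally I would translate this into membership in the set-valued Lie derivative. Taking $v = \dot x(t) \in \Fc(t,x(t))$, the displayed identity says precisely that there exists $v \in \Fc(t,x(t))$ with $\zeta^T v = \frac{d}{dt}h(x(t))$ for all $\zeta \in \partial h(x(t))$, which by Definition \ref{def:setvaluedLie} means $\frac{d}{dt}h(x(t)) \in \widetilde{\Le}_\Fc h(x(t))$. Since this holds for every $t \in \Tc$ and $\Tc$ has full measure, the conclusion follows for almost every $t \in [0,t_1]$. I expect the main obstacle to be the regularity-based chain rule step: upgrading the generic inclusion $\partial F(t) \subseteq \{\zeta^T \dot x(t) : \zeta \in \partial h(x(t))\}$ to an equality (and obtaining regularity of $F$) is where the hypothesis on $h$ is genuinely used and is the only nontrivial analytic input; the remaining steps are measure-theoretic bookkeeping resting on Rademacher's theorem and absolute continuity, in the spirit of \cite{shevitz1994lyapunov,cortes2008discontinuous}.
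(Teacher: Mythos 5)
You should first know that the paper never proves Proposition \ref{prop:Vdot}: it is quoted as a background result from \cite{cortes2008discontinuous,franceschelli2017finite} (it is essentially Lemma 1 of \cite{shevitz1994lyapunov}), so your attempt can only be compared against the standard proof in those references. Your three-step architecture --- (1) absolute continuity of $x$ and of $h \circ x$, hence a.e.\ existence of $\dot x(t) \in \Fc(t,x(t))$ and of $\frac{d}{dt}h(x(t))$; (2) the pointwise identity $\frac{d}{dt}h(x(t)) = \zeta^T \dot x(t)$ for all $\zeta \in \partial h(x(t))$; (3) membership in $\widetilde{\Le}_\Fc h(x(t))$ by taking $v = \dot x(t)$ --- is exactly the architecture of that proof, and steps (1) and (3) are sound. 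The gap is in step (2), which you yourself flag as the heart of the argument. The chain rule you invoke is not Clarke's: Clarke's chain rule with \emph{equality} (and with regularity of the composition as a conclusion) requires the inner map to be \emph{strictly} differentiable at the point in question, not merely differentiable, and even the \emph{inclusion} you state is false once the classical derivative $\dot x(t)$ is put in place of the generalized Jacobian of $x$ at $t$. Concretely, let $x(t) = t^2 \sin(1/t)$ with $x(0)=0$, and $h(y) = |y|$, which is convex and hence regular. At $t=0$ both $x$ and $F = h \circ x$ are differentiable with $\dot x(0) = \dot F(0) = 0$, yet $\partial F(0) = [-1,1]$ while $\brc{\zeta \dot x(0) : \zeta \in \partial h(x(0))} = \brc{0}$: your inclusion fails, the claimed equality fails, and $F$ is not regular at $0$ (by your own correct observation, a regular function that is differentiable at a point has singleton generalized gradient there, even though $h$ is regular).

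This pathology cannot be excluded under the hypotheses of the proposition, so the step genuinely breaks rather than merely being under-justified: Caratheodory solutions are only absolutely continuous, and such functions can fail to be strictly differentiable at almost \emph{every} point. For instance, with $\Fc \equiv [0,1]$ (which satisfies all hypotheses of Proposition \ref{prop:S2}), the solution $x(t) = \int_0^t \bm{1}_E(s)\,ds$, where $E$ and its complement both have positive measure in every interval, is strictly differentiable almost nowhere; hence a strict-differentiability chain rule is unavailable a.e.\ along solutions. The repair is the direct argument of \cite{shevitz1994lyapunov}, which never forms the generalized gradient of the composition. Fix $t$ in your full-measure set $\Tc$. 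Since $x(t+\tau) = x(t) + \tau \dot x(t) + o(\tau)$ and $h$ is Lipschitz near $x(t)$, one gets $\frac{d}{dt}h(x(t)) = \lim_{\tau \downarrow 0} \tau^{-1}\bkt{h(x(t)+\tau \dot x(t)) - h(x(t))} = h'(x(t);\dot x(t))$, the one-sided directional derivative, and from the left-sided limit $\frac{d}{dt}h(x(t)) = -h'(x(t);-\dot x(t))$. Regularity of $h$ alone (no regularity of the composition is needed) says these one-sided directional derivatives coincide with Clarke's generalized directional derivatives, so that $\frac{d}{dt}h(x(t)) = h^{\circ}(x(t);\dot x(t)) = \max\brc{\zeta^T \dot x(t) : \zeta \in \partial h(x(t))}$ and $\frac{d}{dt}h(x(t)) = -h^{\circ}(x(t);-\dot x(t)) = \min\brc{\zeta^T \dot x(t) : \zeta \in \partial h(x(t))}$; the squeeze between the max and the min forces $\zeta^T \dot x(t)$ to be constant over $\partial h(x(t))$ and equal to $\frac{d}{dt}h(x(t))$, which is your displayed identity. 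With step (2) established this way, your steps (1) and (3) complete the proof unchanged.
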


Lastly, the following result will be used to demonstrate finite-time convergence.

\begin{theorem}[\cite{franceschelli2017finite}]
\label{thm:finitetime}
	Let $\mathcal{M} = \text{span}( \bm 1)$. Consider a scalar function $V(x) : \R^d \rarr \R$ with $V(x) = 0$ for all $x \in \mathcal{M}$ and $V(x) > 0$ for all $x \in \R^d \backslash \mathcal{M}$. Let $x : \R \rarr \R^d$ and $V(x(t))$ be absolutely continuous on $[t_0,\infty)$ with $d/dt(V(x(t))) \leq -\eps < 0$ almost everywhere on $\{t : x(t) \notin \mathcal{M} \}$. Then $V(x(t))$ converges to 0 in finite time, implying that $x(t)$ reaches the subspace $\mathcal{M}$ in finite time. 
\end{theorem}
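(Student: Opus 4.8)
The plan is to exploit the absolute continuity of $t \mapsto V(x(t))$ together with the fundamental theorem of calculus, and to run two separate contradiction arguments: one showing that $V$ must \emph{reach} zero within a bounded time horizon, and a second showing that once $V$ reaches zero it can never become positive again. Throughout I would use the crucial observation that, since $V(x) = 0$ precisely on $\mathcal{M}$ and $V(x) > 0$ off $\mathcal{M}$, the conditions $V(x(t)) > 0$ and $x(t) \notin \mathcal{M}$ are equivalent; hence the derivative bound $\frac{d}{dt}V(x(t)) \leq -\eps$ holds almost everywhere on the set $\{t : V(x(t)) > 0\}$. Because $V(x(t))$ is absolutely continuous, and therefore continuous, this set is relatively open in $[t_0,\infty)$, a fact that the second argument relies on.

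First I would establish the reaching claim. Suppose for contradiction that $V(x(t)) > 0$ for every $t$ in the interval $[t_0,\, t_0 + V(x(t_0))/\eps]$. Then the derivative bound applies almost everywhere on this interval, and the fundamental theorem of calculus for absolutely continuous functions gives
\begin{align}
V(x(t)) = V(x(t_0)) + \int_{t_0}^{t} \frac{d}{ds}V(x(s))\, ds \leq V(x(t_0)) - \eps\,(t - t_0). \nonumber
\end{align}
Evaluating at $t = t_0 + V(x(t_0))/\eps$ yields $V(x(t)) \leq 0$, contradicting positivity. Therefore the set $\{t \geq t_0 : V(x(t)) = 0\}$ is nonempty, and by continuity its infimum $T^\ast$ satisfies $V(x(T^\ast)) = 0$ with $T^\ast \leq t_0 + V(x(t_0))/\eps$.

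Next I would prove the invariance claim, that $V(x(t)) = 0$ for all $t \geq T^\ast$. Suppose not; then there exists $t_1 > T^\ast$ with $V(x(t_1)) > 0$. Let $\tau = \sup\{t \in [T^\ast, t_1] : V(x(t)) = 0\}$. By continuity $V(x(\tau)) = 0$, while $V(x(t)) > 0$ for all $t \in (\tau, t_1]$, so $\tau < t_1$ and the derivative bound holds almost everywhere on $(\tau, t_1)$. Integrating as above gives $V(x(t_1)) \leq V(x(\tau)) - \eps(t_1 - \tau) = -\eps(t_1 - \tau) < 0$, contradicting $V(x(t_1)) > 0$. Hence no such $t_1$ exists, $V(x(t))$ remains at zero for all $t \geq T^\ast$, and equivalently $x(t) \in \mathcal{M}$ for all $t \geq T^\ast$.

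I expect the main subtlety to lie not in the reaching argument, which is a routine comparison estimate, but in the invariance argument, specifically in ruling out that $V$ could oscillate back above zero after first touching it. The last-exit-time device $\tau = \sup\{\,\cdot\,\}$, justified by continuity of $V(x(t))$, is what handles this cleanly: it guarantees that the open interval $(\tau, t_1)$ lies entirely inside $\{t : x(t) \notin \mathcal{M}\}$, so that the hypothesized derivative bound is genuinely applicable there. Without isolating such an interval one cannot legitimately integrate the bound, since it is only assumed to hold off $\mathcal{M}$; this is the one place where the measure-zero exceptional set and the ``almost everywhere on a subset'' phrasing must be tracked carefully.
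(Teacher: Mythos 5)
Your proof is correct. Note, however, that the paper does not prove this statement at all: Theorem~\ref{thm:finitetime} is imported verbatim from \cite{franceschelli2017finite} as a tool (it is invoked in the proof of Theorem~\ref{thm:final}), so there is no in-paper argument to compare against. Your two-step structure is the standard one for results of this type and is sound: the reaching step is a comparison estimate obtained by integrating the a.e.\ derivative bound via the fundamental theorem of calculus for absolutely continuous functions, valid because the assumed positivity of $V(x(t))$ on the whole interval makes the bound applicable there; and the invariance step correctly uses the last-exit time $\tau$, whose key properties ($V(x(\tau))=0$ by continuity, and $V(x(t))>0$ on $(\tau,t_1]$ so that the derivative bound legitimately applies on that subinterval) you justify. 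Your identification of the set $\{t : x(t)\notin\mathcal{M}\}$ with $\{t : V(x(t))>0\}$, which is what licenses the integration in both steps, is exactly the point that must be made explicit, and you make it. One small remark: the invariance half is not merely a technical addendum here; it is what the paper actually needs, since condition (ii) of Definition~\ref{def:FTRC} requires $V(\xN(t))=0$ for \emph{all} $t\geq T(\xN(0))$, not just that zero is attained once.
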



\section{Main Results}
\label{sec:main}

The first Lemma of this paper describes a differential inclusion for the total system in \eqref{eq:totalsystem} under the controller \eqref{eq:uFTRC} and demonstrates that it satisfies all the conditions of Proposition \ref{prop:S2}. This will guarantee existence of solutions despite the discontinuous nature of \eqref{eq:uFTRC} and the possibly discontinuous nature of the adversarial signals.


\begin{lemma}
\label{lem:formofG}
Consider the system \eqref{eq:totalsystem} where all normally behaving agents apply the FTRC Protocol (Algorithm \ref{alg:FTRC}). 
Then the dynamics of the system \eqref{eq:totalsystem} 
satisfy the differential inclusion
\begin{align}
\label{eq:realdiffincl}
\dot{x}_{\N}(t) \in G[\fN](\xN(t)),
\end{align}
where
\begin{align}
G[\fN](\xN) = \cvxc\brc{-\alpha \bm 1, \alpha \bm 1}.
\end{align}
Furthermore, $G[\fN](\xN)$ satisfies all the hypotheses of Proposition \ref{prop:S2} {\rv and is locally Lipschitz} for all $\xN \in \R^{|\N|}$ and for all $t \geq 0$.
\end{lemma}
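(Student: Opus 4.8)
The plan is to reduce the entire statement to a single structural fact: the right-hand side of \eqref{eq:uFTRC} is uniformly bounded in magnitude by $\alpha$ \emph{regardless of the state $\xN$ and of the adversarial signals}, so that the governing set-valued map is a fixed, state- and time-independent compact convex set. Once that is established, every hypothesis of Proposition \ref{prop:S2} together with local Lipschitzness becomes essentially automatic, because a constant set-valued map is the best-behaved object one could hope for.

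First I would pin down the range of $\fN$. Since $\sign(\cdot) \in \{-1,0,1\}$ by definition, each normal input obeys $u_i(t) = \alpha\,\sign(\cdot) \in \{-\alpha,0,\alpha\}$ \emph{no matter} what value the filtered argument $\sum_{j \in \J_i \backslash \Rc_i} g(x_j^i) - g(x_i)$ takes. Hence, for every $\xN$ and every admissible adversarial input vector $x_{\A}^{\N}$, the velocity $\fN(\xN, x_{\A}^{\N})$ lies in the finite set $\{-\alpha,0,\alpha\}^{|\N|}$, and in particular $\nrm{\fN(\xN,x_{\A}^{\N})}_\infty \leq \alpha$. Feeding this into Definition \ref{def:diffincl} and taking the convex closure needed to meet the convexity requirement of Proposition \ref{prop:S2}, I would compute $G[\fN](\xN)$ as the convex closure of the achievable velocity set: the componentwise bound shows this set is contained in a fixed compact convex set, and realizing the extreme velocities by suitable adversarial inputs shows the containment is tight. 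Either way, the operative conclusion is that $G[\fN](\xN)$ equals the fixed set stated in the lemma and, crucially, carries \emph{no} dependence on $\xN$ or on $t$.

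With the map identified as a single fixed compact set, the remaining verifications are routine and I would dispatch them in order. Nonemptiness, compactness and convexity are immediate, since the set is the convex closure of a nonempty bounded set. Local boundedness holds with the constant integrable majorant $m(\cdot) \equiv \alpha\sqrt{|\N|}$, as every element has $2$-norm at most $\alpha\sqrt{|\N|}$. Upper semicontinuity at every point is trivial for a constant set-valued map (any $\delta > 0$ works), and the same constancy yields local Lipschitzness with constant $0$ because $G[\fN](y) = G[\fN](z)$ for all $y,z$. Finally, $t \mapsto G[\fN](\xN)$ is constant in $t$, hence measurable. The inclusion \eqref{eq:realdiffincl} itself then holds because at almost every $t$ the true velocity $\dot{\xN}(t) = \fN(\xN(t), x_{\A}^{\N}(t))$ is one of the points collected into $G[\fN](\xN(t))$.

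The step I expect to demand the most care is precisely the measurability/discontinuity issue that obstructs prior continuous-time analyses: under Assumption \ref{assume:measurable} the adversarial signals are merely Lebesgue measurable, and $\fN$ is discontinuous in both $\xN$ and the adversarial inputs. The device that neutralizes this is the observation above that the over-approximating map $G[\fN]$ retains \emph{no} residual dependence on $t$ or $\xN$; all of the pathological, possibly nowhere-continuous behavior is absorbed into the choice of measurable selection rather than into $G[\fN]$ itself. Consequently the measurability-in-$t$ hypothesis of Proposition \ref{prop:S2}---the one condition that the irregular adversarial signals could realistically threaten---is satisfied for free, and existence of Carath\'eodory solutions follows. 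A secondary point to treat carefully is the passage to the convex closure in Definition \ref{def:diffincl}: one must check that replacing the discrete image $\{-\alpha,0,\alpha\}^{|\N|}$ by its convex closure neither omits the actual trajectory nor enlarges the solution set in a way that would invalidate the Lyapunov argument used later.
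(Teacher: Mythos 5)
Your proposal is correct and follows essentially the same route as the paper's own proof: both arguments observe that $u_i \in \{-\alpha,0,\alpha\}$ for every normal agent regardless of the state and of the adversarial signals, identify $G[\fN]$ with the fixed compact convex set (the Cartesian product of intervals $[-\alpha,\alpha]$), and then dispatch nonemptiness/compactness/convexity, local boundedness via the majorant $\alpha\sqrt{|\N|}$, measurability in $t$, and local Lipschitzness (hence upper semicontinuity) as immediate consequences of the map being constant in both $\xN$ and $t$. The extra points you flag---tightness of the convex closure and the worry that convexification might enlarge the solution set---are not needed for the lemma as stated, since only the containment $\dot{x}_{\N}(t) \in G[\fN](\xN(t))$ and the properties of $G[\fN]$ are claimed, and the paper's later Lyapunov analysis is carried out directly on the convexified inclusion.
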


\begin{proof}
See Appendix Section \ref{pf:Lemma1}.
\end{proof}


We will next characterize the functions $M(\cdot)$, $m(\cdot)$, and $V(\cdot)$. These results will be necessary to demonstrate that FTRC is achieved by the system of normal agents.

\begin{lemma}
\label{lem:regLipcont}
Let the functions $ M : \RN \rarr \R$, $m : \RN \rarr \R$, and $V : \RN \rarr \R$ be defined as in \eqref{eq:Lyapunov}.
Then $M(\cdot)$, $(-m(\cdot))$, and $V(\cdot)$ are all regular, locally Lipschitz, and absolutely continuous on $\RN$.
\end{lemma}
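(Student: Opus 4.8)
The plan is to write $M$ and $m$ as the pointwise maximum and minimum of the linear coordinate functions $g_j(\xN) = (e^j)^T \xN$ for $j \in \{1,\ldots,|\N|\}$, and then invoke Proposition \ref{prop:maxfunc}. First I would record the elementary properties of each $g_j$: being linear, it is globally (hence locally) Lipschitz with constant $\nrm{e^j}_2 = 1$, continuously differentiable with $\nabla g_j = e^j$, and convex, so by \cite[Prop.~2.3.6]{clarke1990optimization} it is regular. The same observation applied to $-g_j$ (also linear, hence convex) shows that each $-g_j$ is regular as well. These two facts are exactly the hypotheses needed to read off regularity from Proposition \ref{prop:maxfunc}.

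Next I would apply Proposition \ref{prop:maxfunc} with the collection $\{g_j\}$. Since $M = g_{\max}$ and every $g_j$ is regular, part (2) yields that $M$ is locally Lipschitz and regular on $\RN$. For $m = g_{\min}$, part (3) yields that $m$ is locally Lipschitz, and because each $-g_j$ is regular, the same part gives that $-m = -g_{\min}$ is regular on $\RN$. This settles the claims for $M$ and for $(-m)$.

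For $V = M - m = M + (-m)$, local Lipschitzness is immediate since a finite sum of locally Lipschitz functions is locally Lipschitz. Regularity of $V$ then follows from the sum rule for regular functions: as $M$ and $(-m)$ are both regular, their sum is regular \cite{clarke1990optimization}. This completes the regularity and local Lipschitzness claims. Finally, for absolute continuity I would interpret ``absolutely continuous on $\RN$'' as absolute continuity of the composition with the (absolutely continuous) Caratheodory solutions of \eqref{eq:realdiffincl}, and then use the standard fact that a locally Lipschitz function composed with an absolutely continuous curve is absolutely continuous; this applies uniformly to $M$, $(-m)$, and $V$ once local Lipschitzness is in hand.

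I expect essentially all of the work to reduce mechanically to Proposition \ref{prop:maxfunc} together with the convexity (hence regularity) of affine functions. The two points requiring genuine care are, first, using \emph{part (3)} rather than part (2) for $m$ so that the regularity conclusion is obtained for $(-m)$ — which is precisely what is needed for the sum argument for $V$ — and not mistakenly asserted for $m$ itself; and second, making explicit the sense in which these functions are ``absolutely continuous on $\RN$,'' since absolute continuity is intrinsically a statement about functions of a single real variable and here is to be understood along the system trajectories.
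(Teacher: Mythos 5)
Your proposal is correct and follows essentially the same route as the paper's own proof: decompose $M$ and $m$ as the pointwise max/min of the affine coordinate functions $(e^j)^T \xN$, invoke Proposition \ref{prop:maxfunc} (using convexity of affine functions to get regularity, and part (3) to conclude regularity of $-m$ rather than $m$), and obtain $V$ via the sum rule for locally Lipschitz, regular functions. The only cosmetic difference is in the last step: the paper simply asserts that locally Lipschitz functions on $\RN$ are absolutely continuous, while you make explicit the trajectory-wise interpretation (composition with absolutely continuous Caratheodory solutions), which is indeed how the property is ultimately used in the paper's Theorem \ref{thm:final}.
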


\begin{proof}
See Appendix Section \ref{pf:Lemma2}.
\end{proof}


We next derive the Clarke generalized gradients for $M(\cdot)$ and $m(\cdot)$, which are defined in \eqref{eq:Lyapunov}. 

\begin{lemma}
\label{lem:partialms}
	Let $M : \R^{|\N|} \rarr \R$ and $m: \R^{|\N|} \rarr \R$ be defined as in \eqref{eq:Lyapunov}.
	Let $\{\N_1, \ldots,\N_{|\N|}\}$ be the indices of the normal agents, with $\N_i$ being the index of the $i$th agent in $\N$.
	The Clarke generalized gradients $\partial M$ and $\partial m$ are
	
	\begin{align}
		\partial M(\xN) &= \cvx \bigcup \brc{ e^i : \N_i \in S_M}, \label{eq:partialM} \\
		\partial m(\xN) &= \cvx \bigcup \brc{  e^i : \N_i \in S_m}. \label{eq:partialm}
	\end{align}
\end{lemma}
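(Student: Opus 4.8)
The plan is to express both $M$ and $m$ as a pointwise maximum/minimum of the coordinate functions and then invoke Proposition \ref{prop:maxfunc}. Concretely, for each $k \in \{1,\ldots,|\N|\}$ I would define $g_k : \RN \rarr \R$ by $g_k(\xN) = (e^k)^T \xN$, so that $M(\xN) = \max_k g_k(\xN)$ and $m(\xN) = \min_k g_k(\xN)$, matching the definitions in \eqref{eq:Lyapunov}. Each $g_k$ is linear, hence continuously differentiable with $\nabla g_k(\xN) = e^k$, so its generalized gradient is the singleton $\partial g_k(\xN) = \{e^k\}$ at every point. Moreover each $g_k$ is convex and therefore regular (all convex functions are regular), and $-g_k$ is likewise linear, convex, and regular.

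Next I would apply part 2 of Proposition \ref{prop:maxfunc} to $M = g_{\max}$. The active index set is $I_{\max}(\xN) = \{j : g_j(\xN) = M(\xN)\} = \{j : x_{\N_j} = M(\xN)\} = \{j : \N_j \in S_M\}$, where the last equality uses the definition of $S_M$ in \eqref{eq:Smsets}. Since every $g_j$ is regular, Proposition \ref{prop:maxfunc} yields equality in place of the inclusion \eqref{eq:gmax}, so that $\partial M(\xN) = \cvx \bigcup \{\partial g_j(\xN) : j \in I_{\max}(\xN)\} = \cvx \bigcup \{e^j : \N_j \in S_M\}$, which is exactly \eqref{eq:partialM}. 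The argument for $m$ is identical via part 3 of the same proposition: the minimizing index set is $I_{\min}(\xN) = \{j : \N_j \in S_m\}$, and because each $-g_j$ is regular, equality holds and produces \eqref{eq:partialm}.

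This computation is essentially a direct application of Proposition \ref{prop:maxfunc}, so I do not anticipate a substantive mathematical obstacle. The only points that require care are bookkeeping and the regularity check: I must keep the vector-component index $j$ distinct from the agent label $\N_j \in \V$, correctly identify the active index sets $I_{\max}$ and $I_{\min}$ with the sets $S_M$ and $S_m$, and verify the regularity hypotheses that upgrade the inclusions in \eqref{eq:gmax}--\eqref{eq:gmin} to genuine equalities. Both of these are settled by the linearity of the $g_k$ and $-g_k$, which gives convexity and hence regularity, so the equalities follow immediately.
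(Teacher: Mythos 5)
Your proposal is correct and follows essentially the same route as the paper's own proof: both write $M$ and $m$ as the pointwise maximum/minimum of the linear coordinate functions $(e^k)^T\xN$, note that these (and their negatives) are continuously differentiable and regular so that Proposition \ref{prop:maxfunc} gives equality rather than inclusion, and then identify the active index sets $I_{\max}$, $I_{\min}$ with $S_M$, $S_m$ via \eqref{eq:Smsets}. The only cosmetic difference is that the paper cites Lemma \ref{lem:regLipcont} for regularity while you re-derive it from convexity of linear functions, which is the same argument.
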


\begin{proof}
See Appendix Section \ref{pf:Lemma3}.
\end{proof}

The next theorem proves that $m(\xN(t))$ is nondecreasing on the interval $t \in [0, t_1)$ and that $M(\xN(t))$ is nonincreasing on the interval $t \in [0, t_1)$, where $[0,t_1)$ is the interval on which $\xN(t)$ is a solution to \eqref{eq:realdiffincl}. This will imply that the states of all agents remain within the invariant set $[m(\xN(0)), M(\xN(0))]$ for all $t \geq 0$.

\begin{theorem}
\label{thm:derivatives}
Consider a digraph $\D = \{\V, \E\}$ with the system dynamics \eqref{eq:realdiffincl} under the {\rv FTRC Protocol in Algorithm \ref{alg:FTRC}}. Suppose that $\A$ is an $F$-local model and that $\D$ is $(2F+1)$-robust. Let $m(\xN(t))$ and $M(\xN(t))$ be defined as in \eqref{eq:Lyapunov}. Then the derivatives $\frac{d}{dt} (M(\xN(t)))$ and $\frac{d}{dt} (m(\xN(t)))$ exist at almost all $t \in [0, t_1)$ and satisfy 
\begin{align}
\frac{d}{dt} (M(\xN(t))) &\in [-\alpha, 0], \\
\frac{d}{dt} (m(\xN(t))) &\in [0, \alpha],
\end{align}
at almost all $t \in [0, t_1)$.
\end{theorem}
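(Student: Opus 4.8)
The plan is to combine the nonsmooth chain rule of Proposition \ref{prop:Vdot} with a direct, combinatorial analysis of the FTRC filtering step, obtaining the sign of the velocity of the agents that currently attain the maximum (resp.\ minimum). By Lemma \ref{lem:regLipcont} the functions $M$ and $-m$ are locally Lipschitz and regular, and by Lemma \ref{lem:formofG} the map $G[\fN]$ satisfies the hypotheses of Proposition \ref{prop:S2}; hence Proposition \ref{prop:Vdot} applies and guarantees that $t\mapsto M(\xN(t))$ and $t\mapsto m(\xN(t))$ are differentiable at almost every $t\in[0,t_1)$, with $\frac{d}{dt}M(\xN(t))\in\widetilde{\Le}_{G[\fN]}M(\xN(t))$ and the analogous statement for $-m$. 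At almost every such $t$ the derivative $\dot x_\N(t)$ exists and, because $\xN$ is a Caratheodory solution of \eqref{eq:totalsystem}, equals the realized velocity $v(t)=(u_{\N_1}(t),\dots,u_{\N_{|\N|}}(t))^\T$. Using the characterization of the set-valued Lie derivative together with the regularity of $M$, the value $\frac{d}{dt}M(\xN(t))$ equals $\zeta^\T v(t)$ for every $\zeta\in\partial M(\xN(t))$, and by Lemma \ref{lem:partialms} each such $\zeta$ is a convex combination of the vectors $\{e^i:\N_i\in S_M\}$. The problem therefore reduces to bounding the components $u_{\N_i}$ of $v(t)$ for the indices with $\N_i\in S_M$ (and symmetrically $\N_j\in S_m$).

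The heart of the argument is the following claim, which I would prove purely from the sorting rule and $F$-locality: \emph{if $p\in\N$ attains the maximum, i.e.\ $x_p=M(\xN)$, then $u_p\in[-\alpha,0]$; and if $q\in\N$ attains the minimum, then $u_q\in[0,\alpha]$.} For the first part, since $g$ is strictly increasing and every normal agent satisfies $x_\ell\le M(\xN)=x_p$, any received value $g(x_j^p)$ that is strictly larger than $g(x_p)$ must originate from an adversarial in-neighbor. Under the $F$-local model $p$ has at most $F$ adversarial in-neighbors, so at most $F$ of the received values strictly exceed $g(x_p)$. Steps 2)--3) of Algorithm \ref{alg:FTRC} then remove either all values strictly greater than $g(x_p)$ (when fewer than $F$ of them exist) or the $F$ largest values (when exactly $F$ do); in either case every value retained in $\J_p\backslash\Rc_p$ is at most $g(x_p)$. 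Consequently the argument of the sign function in \eqref{eq:uFTRC} is nonpositive, so $u_p=\alpha\,\sign(\cdot)\in\{-\alpha,0\}\subseteq[-\alpha,0]$. The statement for the minimizer $q$ follows by the symmetric argument applied to the smallest $F$ values.

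Combining the two pieces finishes the proof. Writing $\zeta=\sum_{i:\N_i\in S_M}\lambda_i e^i$ with $\lambda_i\ge0$ and $\sum_i\lambda_i=1$, I obtain $\frac{d}{dt}M(\xN(t))=\zeta^\T v(t)=\sum_{i:\N_i\in S_M}\lambda_i u_{\N_i}(t)$, a convex combination of numbers each lying in $[-\alpha,0]$, hence $\frac{d}{dt}M(\xN(t))\in[-\alpha,0]$. Applying Proposition \ref{prop:Vdot} to the regular function $-m$ and using $\partial(-m)=\cvx\{-e^j:\N_j\in S_m\}$ gives $\frac{d}{dt}m(\xN(t))=\sum_{j:\N_j\in S_m}\mu_j u_{\N_j}(t)$ with each $u_{\N_j}(t)\in[0,\alpha]$, so $\frac{d}{dt}m(\xN(t))\in[0,\alpha]$.

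I expect the main obstacle to be the rigorous justification of the pointwise identity $\frac{d}{dt}M(\xN(t))=\zeta^\T v(t)$ for all $\zeta\in\partial M$ at almost every $t$: this is exactly where the nonsmoothness of $M$ (several agents simultaneously attaining the maximum, i.e.\ $|S_M|>1$, or the identity of the argmax changing) must be handled through the regularity from Lemma \ref{lem:regLipcont}, the generalized gradient from Lemma \ref{lem:partialms}, and the set-valued Lie derivative in Proposition \ref{prop:Vdot}. The filtering claim itself, though conceptually central, is an elementary consequence of $F$-locality and the ordering of the received values; I note that this monotonicity result uses only the $F$-local assumption, the $(2F+1)$-robustness hypothesis being needed elsewhere (for the strict decrease of $V$ and the finite-time guarantee) rather than here.
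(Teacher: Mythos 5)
Your proposal is correct, and its core mechanism is the same as the paper's: Proposition \ref{prop:Vdot} (via Lemmas \ref{lem:formofG} and \ref{lem:regLipcont}) for almost-everywhere differentiability, the generalized gradient from Lemma \ref{lem:partialms}, and then the filtering argument (strictly increasing $g$, $F$-locality, steps 2--3 of Algorithm \ref{alg:FTRC}) to pin down the sign of $u_i$ for agents attaining the extremum --- that last argument is word-for-word the heart of the paper's proof as well. Where you differ is in the nonsmooth bookkeeping. The paper first characterizes the full set-valued Lie derivative, proving an auxiliary lemma (that $\theta^\T q = a$ for all $\theta$ in the simplex iff $q = a\bm 1$) to show $\widetilde{\Le}_{G}M(\xN) = [-\alpha,\alpha]$, and then sharpens to $[-\alpha,0]$ by contradiction: if $\frac{d}{dt}M > 0$ then some agent in $S_M$ would need $u_{\N_{i'}}(t) > 0$, which the filtering argument forbids. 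You instead skip the Lie-derivative characterization and the contradiction entirely, writing $\frac{d}{dt}M(\xN(t)) = \zeta^\T \dot x_\N(t) = \sum_{i:\N_i\in S_M}\lambda_i u_{\N_i}(t)$ and bounding a convex combination of terms in $[-\alpha,0]$; this is shorter and makes the auxiliary lemma unnecessary. One caveat applies to both proofs equally: the identity $\frac{d}{dt}M(\xN(t)) = \zeta^\T \dot x_\N(t)$ for all $\zeta\in\partial M$ at almost every $t$ is slightly stronger than the literal statement of Proposition \ref{prop:Vdot} (which only asserts existence of \emph{some} $v\in G[\fN]$ realizing the derivative); it is the standard chain rule for regular functions along absolutely continuous trajectories, and the paper's contradiction step implicitly relies on the same fact when it converts $\frac{d}{dt}M > 0$ into a statement about the realized controls, so you are no worse off --- indeed you flag the issue explicitly. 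Your closing observation is also correct and worth keeping: this theorem uses only $F$-locality; the $(2F+1)$-robustness hypothesis is idle here and is only needed later for the strict decrease of $V$.
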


\begin{proof}
See Appendix Section \ref{pf:Theorem3}.
\end{proof}

The preceding Lemma demonstrates that $M(\xN(t))$ is nonincreasing and $m(\xN(t))$ is nondecreasing for all $t \geq 0$, and therefore all agents' states remain within the interval $[m(\xN(0)), M(\xN(0))]$ for all $t \geq 0$. This implies that the hyperrectangle $P(0) \subset \R^{|\N|}$ defined as
\begin{align}
\label{eq:Pset}
P(0) = \bmx{[m(\xN(0)), M(\xN(0))] \\ \vdots \\ [m(\xN(0)), M(\xN(0))}
\end{align}
 is invariant for all $t \geq 0$, which is precisely condition (i) of {\rv Finite-Time Resilient Consensus} (Definition \ref{def:FTRC}).
 
The next result demonstrates that the time derivative of the composition $V(\xN(t))$, wherever it exists, is upper bounded by $-\alpha$ when $\xN(t)$ is not in $\text{span}(\bm 1)$.

\begin{theorem}
\label{thm:Vless}
Let $V(\cdot)$ be defined as in \eqref{eq:Lyapunov}.
Under the conditions of Theorem \ref{thm:derivatives}, the derivative $\frac{d}{dt}V(\xN(t))$ exists at almost all $t \in [0,t_1)$. Furthermore, for all $\xN(t) \notin \text{span}(\bm 1)$, the derivative of $V(\xN(t))$ satisfies
\begin{align}
\label{eq:Vless}
	\frac{d}{dt} (V(\xN(t))) \leq -\alpha < 0
\end{align}
at almost all $t \in [0,t_1)$.
\end{theorem}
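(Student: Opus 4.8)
The plan is to reduce the entire statement to Theorem \ref{thm:derivatives} plus a reachability argument driven by the $(2F+1)$-robustness of $\D$. Lemma \ref{lem:regLipcont} makes $M$, $-m$, and $V = M-m$ locally Lipschitz, regular, and absolutely continuous, so Proposition \ref{prop:Vdot} guarantees that $t \mapsto V(\xN(t))$ is differentiable at almost every $t \in [0,t_1)$. At each $t$ where both $\frac{d}{dt}M(\xN(t))$ and $\frac{d}{dt}m(\xN(t))$ exist (almost every $t$, by Theorem \ref{thm:derivatives}), linearity of the derivative gives $\frac{d}{dt}V(\xN(t)) = \frac{d}{dt}M(\xN(t)) - \frac{d}{dt}m(\xN(t))$. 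Since Theorem \ref{thm:derivatives} already places these in $[-\alpha,0]$ and $[0,\alpha]$, the whole task is to show that, whenever $\xN(t) \notin \text{span}(\bm 1)$, at least one of them attains its extreme value ($-\alpha$ or $+\alpha$).

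First I would fix an almost-every time $t$ with $\xN(t) \notin \text{span}(\bm 1)$, so that $M(\xN(t)) > m(\xN(t))$ and the sets $S_M$, $S_m$ from \eqref{eq:Smsets} are nonempty and disjoint. Treating $S_M, S_m$ as a pair of disjoint nonempty subsets of $\V$ and invoking $(2F+1)$-robustness, at least one is $(2F+1)$-reachable; I treat the case $S_M$ reachable and note the $S_m$ case is symmetric. Reachability yields an agent $i^\ast \in S_M$ with at least $2F+1$ in-neighbors outside $S_M$, of which at most $F$ are adversarial under the $F$-local model, leaving at least $F+1$ normal in-neighbors whose states are strictly below $M(\xN(t))$.

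The core of the argument is the filtering step of Algorithm \ref{alg:FTRC} at $i^\ast$. Since $g$ is strictly increasing and every normal state is at most $M(\xN(t))$, the only values exceeding $g(x_{i^\ast})$ originate from the at most $F$ adversarial in-neighbors, so step 2) removes all values strictly greater than $g(x_{i^\ast})$ and no surviving value exceeds $g(x_{i^\ast})$. Because there are at least $F+1$ normal in-neighbors strictly below $M(\xN(t))$, step 3) removes at most $F$ of them, so at least one strictly-smaller normal value survives in $\J_{i^\ast}\backslash\Rc_{i^\ast}[t]$. Hence the filtered sum in \eqref{eq:uFTRC} consists of nonpositive terms with at least one strictly negative term, forcing $u_{i^\ast}(t) = -\alpha$. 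I would then transfer this to the derivative: Lemma \ref{lem:partialms} gives $\partial M(\xN) = \cvx\{e^i : \N_i \in S_M\}$, and the set-valued Lie derivative of Definition \ref{def:setvaluedLie} forces the agents of $S_M$ to share a common velocity equal to $\frac{d}{dt}M(\xN(t))$ at differentiable $t$ (the mechanism already used for Theorem \ref{thm:derivatives}); identifying that velocity with $\dot x_{i^\ast}(t) = u_{i^\ast}(t)$ gives $\frac{d}{dt}M(\xN(t)) = -\alpha$. With $\frac{d}{dt}m(\xN(t)) \geq 0$ from Theorem \ref{thm:derivatives} this yields $\frac{d}{dt}V(\xN(t)) \leq -\alpha$, and the symmetric case ($S_m$ reachable, $u_{i^\ast} = +\alpha$, $\frac{d}{dt}m = \alpha$) gives the same bound.

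I expect the main obstacle to be this final transfer step: rigorously equating the pointwise control value $u_{i^\ast}(t) = -\alpha$ with the value of $\frac{d}{dt}M(\xN(t))$ at differentiable times, given that the trajectory is captured through the differential inclusion \eqref{eq:realdiffincl} and its set-valued Lie derivative rather than through a classical solution of \eqref{eq:totalsystem}. One must argue that on the full-measure differentiability set the realized velocities of the max-attaining agents agree across $S_M$ (so $\frac{d}{dt}M$ is single-valued and equals that agent's velocity) and that this realized velocity is the filtered value $-\alpha$, not merely some element of the coarse inclusion. A secondary subtlety is the counting in step 3): verifying under the $F$-local assumption that the removal of the smallest $F$ values cannot eliminate all $F+1$ strictly-smaller normal neighbors, which must be stated carefully.
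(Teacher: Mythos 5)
Your route is structurally the same as the paper's: differentiability a.e.\ of $V = M - m$ via Lemma \ref{lem:regLipcont} and Proposition \ref{prop:Vdot}, a $(2F+1)$-robustness/reachability argument producing an agent $i^\ast \in S_M$ with $u_{i^\ast}(t) = -\alpha$ (or $i^\ast \in S_m$ with $u_{i^\ast}(t) = \alpha$), and then a transfer of that pointwise control value to $\frac{d}{dt}M(\xN(t)) = -\alpha$ (resp.\ $\frac{d}{dt}m(\xN(t)) = \alpha$), combined with the bounds of Theorem \ref{thm:derivatives}. Your filtering argument is correct and in fact more explicit than the paper's: the count ``at least $2F+1$ in-neighbors outside $S_M$, at most $F$ adversarial, hence at least $F+1$ normal values strictly below $g(x_{i^\ast})$, of which step 3) can remove at most $F$'' is exactly the justification the paper leaves partly implicit.

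The one genuine gap is the transfer step you flag yourself, and the mechanism you propose for it cannot close it. By Definition \ref{def:setvaluedLie}, $a \in \Lt_{G}M(\xN)$ only requires that \emph{some} $v$ in the inclusion set $G[\fN](\xN) = \cvxc\brc{-\alpha \bm 1, \alpha \bm 1}$ has all its $S_M$-components equal to $a$; since that set is a box containing such a $v$ for every $a \in [-\alpha,\alpha]$, the Lie-derivative computation used in Theorem \ref{thm:derivatives} yields only $\Lt_{G}M(\xN) = [-\alpha,\alpha]$ and constrains nothing about the trajectory's \emph{realized} velocity vector. The paper closes this with a separate elementary lemma (Lemma \ref{eq:samederivatives} in the appendix), proved by contraposition: if $u_{i_1}(t) \neq u_{i_2}(t)$ for some $i_1,i_2 \in S_M$, then $\dot{x}_{i_1}(t) \neq \dot{x}_{i_2}(t)$ while $x_{i_1}(t) = x_{i_2}(t) = M(\xN(t))$, so the pointwise maximum $M(\xN(\cdot))$ fails to be differentiable at $t$ (its one-sided derivatives are the max, resp.\ min, of the velocities over $S_M$). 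Hence at every $t$ where $\frac{d}{dt}M(\xN(t))$ exists, all agents in $S_M$ share one control value; since $u_{i^\ast}(t) = -\alpha$, that common value is $-\alpha$ and $\frac{d}{dt}M(\xN(t)) = -\alpha$, with the mirror statement for $S_m$. Inserting this lemma where you invoke ``the mechanism already used for Theorem \ref{thm:derivatives}'' makes your proof complete and essentially identical to the paper's.
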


\begin{proof}
See Appendix Section \ref{pf:Theorem4}.
\end{proof}

Our final theorem completes the paper by showing that FTRC is achieved by the system of normal agents.
{\rv
In particular, this theorem demonstrates that solutions to the trajectories of the normal agents exist on the time interval $t \in [0,\infty)$, and that there exists a time $T \geq 0$ such that $\xN(t) \in  \text{span}(\bm 1)$ for all $t \geq T$.
}

\begin{theorem}
\label{thm:final}
Consider a digraph $\D = \{\V, \E\}$ with the system dynamics \eqref{eq:realdiffincl} under the {\rv FTRC Protocol in Algorithm \ref{alg:FTRC}}. Suppose that $\A$ is an $F$-local model and that $\D$ is $(2F+1)$-robust. Then the normal agents achieve FTRC as described in Definition \ref{def:FTRC}.
\end{theorem}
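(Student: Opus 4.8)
The plan is to stitch together the preceding lemmas and theorems to verify the two conditions of Definition \ref{def:FTRC}, after first upgrading the local existence of solutions to global existence on $[0,\infty)$. By Lemma \ref{lem:formofG}, $G[\fN]$ meets the hypotheses of Proposition \ref{prop:S2}, so a Caratheodory solution $\xN(t)$ exists on some maximal interval $[0,t_1)$. On that interval Theorem \ref{thm:derivatives} shows $M(\xN(t))$ is nonincreasing and $m(\xN(t))$ is nondecreasing, so $\xN(t)$ never leaves the compact hyperrectangle $P(0)$ of \eqref{eq:Pset}; this invariance is already exactly condition (i), since each $x_i(t)$, $i\in\N$, stays in $[m(\xN(0)),M(\xN(0))]$ for all $t\ge 0$. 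To rule out finite-time escape I would fix an arbitrary $T>0$, pick $R>0$ with $P(0)\subset \text{int}(\bar B(0,R))$, and apply the continuation result Theorem \ref{thm:Filippov} on the compact domain $D=[0,T]\times \bar B(0,R)$, whose hypotheses hold by Lemma \ref{lem:formofG}. Since the trajectory stays in $P(0)$ and hence away from the spatial boundary of $D$, the only boundary it can reach is the face $t=T$; as $T$ is arbitrary, the solution extends to all of $[0,\infty)$.

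For condition (ii) I would invoke the finite-time convergence result Theorem \ref{thm:finitetime} with $\mathcal{M}=\text{span}(\bm 1)$ and $V$ from \eqref{eq:Lyapunov}. Its hypotheses follow from the earlier results: $V(\xN)=M(\xN)-m(\xN)=0$ precisely when all normal states coincide, i.e. $\xN\in\mathcal{M}$, and $V(\xN)>0$ otherwise; $\xN(t)$ is absolutely continuous, being a Caratheodory solution, and $V$ is locally Lipschitz by Lemma \ref{lem:regLipcont}, so $V(\xN(t))$ is absolutely continuous; and Theorem \ref{thm:Vless} supplies $\tfrac{d}{dt}V(\xN(t))\le-\alpha<0$ almost everywhere on $\{t:\xN(t)\notin\mathcal{M}\}$, giving the required $\eps=\alpha$. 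Theorem \ref{thm:finitetime} then yields a finite $T=T(\xN(0))$ with $\xN(T)\in\mathcal{M}$.

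To obtain the stronger \emph{for all $t\ge T$} clause of condition (ii) rather than mere arrival at $\mathcal{M}$, I would use monotonicity once more: for $t\ge T$ we have $M(\xN(t))\le M(\xN(T))=m(\xN(T))\le m(\xN(t))$, while $M(\xN(t))\ge m(\xN(t))$ holds always, forcing $V(\xN(t))=0$ for all $t\ge T$. This establishes condition (ii), and combined with condition (i) it shows the normal agents achieve FTRC.

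The main obstacle I anticipate is the forward-completeness argument: it requires the slightly delicate bootstrapping in which invariance of $P(0)$ (which itself presupposes an existing solution on $[0,t_1)$) is combined with the Filippov continuation theorem on an enlarged compact domain to preclude finite-time escape of the maximal solution. Once completeness is secured, condition (i) is immediate from Theorem \ref{thm:derivatives}, and condition (ii) reduces to checking the hypotheses of Theorem \ref{thm:finitetime} via Theorem \ref{thm:Vless} together with the short monotonicity argument establishing invariance of $\text{span}(\bm 1)$.
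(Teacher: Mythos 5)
Your proposal is correct and follows essentially the same route as the paper's proof: condition (i) from the invariance established by Theorem \ref{thm:derivatives}, global existence of solutions by applying the Filippov continuation result (Theorem \ref{thm:Filippov}) on a compact domain strictly containing the invariant set $P(0)$, and condition (ii) by feeding Theorem \ref{thm:Vless} into the finite-time convergence result Theorem \ref{thm:finitetime}. If anything, your explicit monotonicity argument showing $V(\xN(t))=0$ persists for all $t \geq T$ (via $M(\xN(t)) \leq M(\xN(T)) = m(\xN(T)) \leq m(\xN(t))$) is a point the paper leaves implicit in its citation of Theorem \ref{thm:finitetime}, and your only omission relative to the paper is the explicit convergence-time bound $T(\xN(0)) = \frac{1}{\alpha}V(\xN(0))$, which Definition \ref{def:FTRC} does not require.
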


\begin{proof}
See Appendix Section \ref{pf:Theorem5}.
\end{proof}

\section{Simulations}
\label{sec:simulations}

Our simulations are for a system of $n = 15$ agents. The underlying communication graph is a $k$-circulant digraph with $k = 11$, which can be shown to be at least $6$-robust using results from \cite{usevitch2017circulant}. The highest (integer) value of $F$ for which we can infer the graph is $(2F+1)$-robust is therefore $F = 2$. Each agent's initial state $x_i(0) \in \R$, $i \in \V$ is a random value on the interval $[0,50]$. Two agents are chosen at random to be adversaries, resulting in $\A = \{2,13\}$. We emphasize that \emph{the normally-behaving agents have no knowledge as to whether their in-neighbors are adversarial or normal}. The adversarial agents are \emph{malicious} \cite{leblanc2013resilient}, meaning each adversary updates its state according to some arbitrary function of time but sends the same state value to all of its out-neighbors. All other agents are normal and apply the FTRC-P from Algorithm \ref{alg:FTRC} with $\alpha = 10$. The function $g : \R \rarr \R$ in \eqref{eq:uFTRC} is chosen to be $g(x) = (1/10)x^3 + (1/1000)x^5 + (1/10000)x^7$, which can be verified to be a strictly increasing function.
Figure \ref{fig:sim1} shows the results of this first simulation, with malicious agents represented by red dotted lines and normally-behaving agents represented by solid colored lines. The normal agents achieve consensus in a finite amount of time despite the influence of the adversarial agents.

\begin{figure}
\centering
\includegraphics[width=.8\columnwidth]{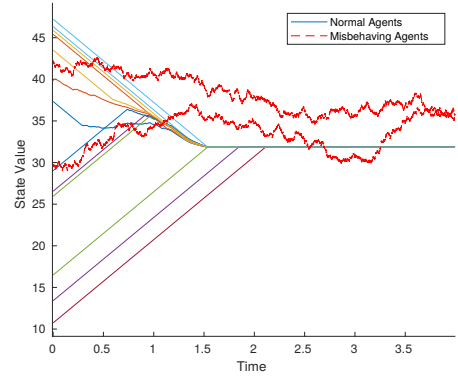}
\caption{Simulation of a network of 15 agents appling the FTRC-P. The dotted red lines represent the adversarial agents.}
\label{fig:sim1}
\end{figure}

\section{Conclusion}
\label{sec:conclusion}

In this paper we presented conditions under which finite-time convergence of normally-behaving agents in the presence of discontinuous, nonlinear adversarial signals is guaranteed. Future work will extend the use of discontinuous systems theory to other resilient continuous-time control objectives.




\section{APPENDIX}
\label{sec:appendix}

\subsection{Proof of Lemma 1}
\label{pf:Lemma1}

By the definition of the $\sign(\cdot)$ function, observe that for all $i \in \N$ we have $u_i \in \{-\alpha, 0, \alpha\}$. Note that this holds for all possible adversarial signals $x_{\A}^{\N}$ defined in \eqref{eq:alladversaries}. Therefore $\dot{x}_i(t) \in [-\alpha, \alpha]$ for all $i \in \N$, implying that $\dot{ x}_{\N}(t) \in \cvxc\brc{-\alpha \bm 1, \alpha \bm 1} = G[\fN] (\xN)$ for all $\xN \in \R^{|\N|}$.

Next, we show that $G[\fN] (\xN)$ satisfies all the hypotheses of Proposition \ref{prop:S2}. Note that $G[\fN] (\xN)$ takes nonempty, compact, and convex values. Since $G[\fN](\xN)$ is time-invariant and equal to the Cartesian product of intervals $[-\alpha, \alpha] \times \ldots \times [-\alpha,\alpha]$, it is measurable for all $x \in \R^{|\N|}$ and for all $t \geq 0$. 
To show local boundedness note that for all $\xN \in \R^{|\N|}$, for all $t \geq 0$, and for all $v \in \G[f](\xN)$ we have $\nrm{v}_2 = \pth{\sum_{i = 1}^{|\N|} |v_i|^2}^{(1/2)} \leq \pth{\sum_{i = 1}^{|\N|} |\alpha|^2}^{(1/2)} = \sqrt{|\N|} \alpha$. Letting $\gamma(t) = \sqrt{|\N|} \alpha$, it follows that for all $(t, x) \in [0,\infty)\times \R^{|\N|}$ and for all $\eps,\delta > 0$ we have $\nrm{v}_2 \leq \gamma(s)$ for all $v \in \G[f](\xN)$, for all $s \in [t,t+\delta]$, and for all $y \in B(x,\eps)$.

{\rv
Finally, $G[f_\N](x_\N)$ can be shown to be locally Lipschitz by noting that since $G[f_\N](x_\N) $ is constant for all $x \in \R^{|\N|}$, it holds that for all $x_{|\N|} \in \R^{\N}$ there exists $L > 0$, $\eps > 0$ such that $G[f_\N](y) = \cvxc\brc{-\alpha \bm 1, \alpha \bm 1} \subseteq \cvxc\brc{-\alpha \bm 1, \alpha \bm 1} + L \nrm{y - z}\bar{B}(0,1) = G[f_\N](z) + L \nrm{y - z}\bar{B}(0,1) $ for all $y,z \in B(x_{\N}, \eps)$. Since local Lipschitzness of $G[f_\N](x_\N)$ implies upper semicontinuity of $G[f_\N](x_\N)$ \cite{cortes2008discontinuous}, $G[f_\N](x_\N)$ therefore satisfies all the hypotheses of Proposition \ref{prop:S2}.
}

\subsection{Proof of Lemma 2}
\label{pf:Lemma2}

Recall that $e^i$ is the $i$th column of the $|\N| \times |\N|$ identity matrix. Observe that $M(\xN)$ is the pointwise maximum over the functions $(e^i)^T \xN$ for $i \in \N$, which are all locally Lipschitz on $\RN$. By Proposition \ref{prop:maxfunc}, $M(\xN)$ is therefore locally Lipschitz on $\RN$. In addition, each function $(e^i)^T \xN$ is affine, and therefore convex and regular on $\RN$. Since for all possible indices $i \in \{1,\ldots,\N\}$ the functions $(e^i)^T \xN$ are regular on $\RN$, by Proposition \ref{prop:maxfunc} $M(\xN)$ is regular on $\RN$.

Similarly, $m(\xN)$ is the pointwise minimum over the functions $(e^i)^T \xN$ for $i \in \{1,\ldots,\N\}$, which are all locally Lipschitz on $\RN$. By Proposition \ref{prop:maxfunc}, $m(\xN)$ is therefore locally Lipschitz on $\RN$.
Since each $(e^i)^T \xN$ is affine, each function $-(e^i)^T \xN$ is also affine and therefore convex and regular on $\RN$. Therefore by Proposition \ref{prop:maxfunc} the function $(-m(\xN))$ is regular on $\RN$. 

Since $V(\xN)$ is equal to the sum of two locally Lipschitz and regular functions, it holds that $V(\xN)$ is also locally Lipschitz and regular \cite{cortes2008discontinuous}. Finally, every locally Lipschitz function on $\RN$ is absolutely continuous on $\RN$ \cite{cortes2008discontinuous}, which implies that $M(\xN)$, $(-m(\xN))$, and $V(\xN)$ are all absolutely continuous.

\subsection{Proof of Lemma 3}
\label{pf:Lemma3}

By Lemma \ref{lem:regLipcont}, $M(\xN)$ is the pointwise maximum over the functions $(e^i)^T \xN$ for $i \in \{1,\ldots,|\N| \}$, which are all locally Lipschitz and regular on $\RN$. Furthermore, each function $(e^i)^T \xN$ is continuously differentiable at all $\xN \in \RN$, implying that $\partial((e^i)^T \xN) = \nabla((e^i)^T \xN) = e^i$ \cite{cortes2008discontinuous}. By Proposition \ref{prop:maxfunc}, we therefore have
\begin{align}
\partial M(\xN) = \cvx \bigcup \brc{e^i : i \in I_{\max}(\xN)}, \label{eq:partialpartialM}
\end{align}
where $I_{\max}(\xN)$ denotes the indices $j$ such that $(e^j)^T \xN = x_{\N_j} = M(\xN)$ (recall from \eqref{eq:normalvector} that $(e^j)^T \xN = x_{\N_j}$, where $\N_j$ is the index of the $j$th normal agent in $\N$). By equation \eqref{eq:Smsets}, the set of indices $\N_j$ such that $x_{\N_j} = M(\xN)$ is precisely $S_M(\xN)$, which by substitution into \eqref{eq:partialpartialM} yields \eqref{eq:partialM}.

Similar arguments can be used to derive $\partial m(\xN)$. The function $m(\xN)$ is the pointwise minimum over the functions $(e^i)^T \xN$ for $i \in \{1,\ldots,|\N| \}$ which are all locally Lipschitz, regular, and continuously differentiable on $\RN$. Observe that the functions $-(e^i)^T \xN$ are also locally Lipschitz, regular, and continuously differentiable on $\RN$. By Proposition \ref{prop:maxfunc}, we therefore have
\begin{align}
\partial m(\xN) = \cvx \bigcup \brc{e^i : i \in I_{\min}(\xN)}, \label{eq:partialpartialm}
\end{align}
where $I_{\min}(\xN)$ denotes the indices $j$ such that $(e^j)^T \xN = x_{\N_j} = m(\xN)$. By Definition \ref{eq:Smsets}, the set of indices $\N_j$ such that $x_{\N_j} = m(\xN)$ is precisely $S_m(\xN)$, which by substitution into \eqref{eq:partialpartialm} yields \eqref{eq:partialm}.
{\rv
As a final note, observe that since $m(\cdot)$ is locally Lipschitz by Lemma \ref{lem:regLipcont}, by the Dilation Rule \cite{cortes2008discontinuous} we can derive $\partial(-m(x_\N)) =  \partial ((-1)m(x_\N)) = - (\partial m(x_\N))$.
}

\subsection{Proof of Theorem 3}
\label{pf:Theorem3}

{\rv We will first need the following Lemma for the proof of Theorem 3.}

\begin{lemma}
\label{lem:convex}
Let $ q \in \R^m$ and let $\Theta = \{ \theta \in \R^m :  \theta \ggeq \bm 0,\ \bm 1^T  \theta = 1\}$. Let $a \in \R$. Then $ \theta^T  q = a$ for all $ \theta \in \Theta$ if and only if $ q = a \bm 1$. 
\end{lemma}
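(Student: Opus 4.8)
The plan is to prove the two directions of the biconditional separately, with the forward (sufficiency) direction being an immediate computation and the reverse (necessity) direction resting on the observation that the standard basis vectors are the extreme points of the simplex $\Theta$.

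For the ``if'' direction, I would suppose $q = a \bm 1$ and compute directly: for any $\theta \in \Theta$ we have $\theta^T q = \theta^T (a \bm 1) = a (\bm 1^T \theta)$, and since membership in $\Theta$ forces $\bm 1^T \theta = 1$, this equals $a$. Hence $\theta^T q = a$ for every $\theta \in \Theta$, with no constraints beyond the defining ones of $\Theta$.

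For the ``only if'' direction, I would suppose $\theta^T q = a$ holds for all $\theta \in \Theta$. The key step is to note that each standard basis vector $e^i$ (the $i$th column of the identity) satisfies $e^i \ggeq \bm 0$ and $\bm 1^T e^i = 1$, so $e^i \in \Theta$ for every $i \in \{1, \ldots, m\}$. Substituting $\theta = e^i$ into the hypothesis then gives $(e^i)^T q = q_i = a$. Since this holds for every index $i$, all components of $q$ equal $a$; that is, $q = a \bm 1$, as desired.

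There is no substantial obstacle here: the lemma is essentially the statement that the vectors $e^i$ are the extreme points of the probability simplex $\Theta$ and that a linear functional taking a constant value on $\Theta$ must take that value on these extreme points in particular. The only point requiring care is verifying that each $e^i$ genuinely lies in $\Theta$ (nonnegativity and the unit-sum constraint), which is what legitimizes the substitution $\theta = e^i$ and drives the whole argument.
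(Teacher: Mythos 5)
Your proof is correct and uses essentially the same approach as the paper: both directions hinge on the observation that the standard basis vectors $e^i$ lie in $\Theta$, so plugging them in pins down the components of $q$. The only cosmetic difference is that the paper phrases the reverse direction as a contrapositive (picking a single index $j$ with $q_j \neq a$ and testing $\theta^* = e_j$), whereas you argue directly by substituting every $e^i$; these are logically interchangeable.
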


\begin{proof}
\emph{Necessity:} If $ q = a \bm 1$, then for all $ \theta \in \Theta$ we have $ \theta^T  q =  q^T  \theta = a(\bm 1^T \theta) = a$.

\emph{Sufficiency:} We prove the contrapositive, i.e. $ q \neq a \bm 1$ implies there exists $ \theta^* \in \Theta$ such that $( \theta^*)^T  q \neq a$. If $ q \neq a \bm 1$ then there exists $j \in \{1,\ldots,m\}$ such that $q_j \neq a$. Choose $ \theta^* =  e_j$, where $ e_j$ is the $j$th column of the identity matrix. Clearly, we then have $ \theta_* \ggeq \bm 0$ and $\bm 1^T  \theta^* = 1$, implying $ \theta^* \in \Theta$. Then $( \theta^*)^T  q =  e_j^T  q = q_j \neq a$. 
\end{proof}

{\rv We now give the proof of Theorem 3.}
Where possible, we abbreviate $\xN(t)$ to $\xN$ for brevity.
By Lemma \ref{lem:formofG}, solutions $\xN(t)$ to the differential inclusion \eqref{eq:realdiffincl} are guaranteed.
By Lemma \ref{lem:regLipcont}, the functions $M(\cdot)$ and $(-m(\cdot))$ are both locally Lipschitz and regular on $\RN$.
Therefore by Proposition \ref{prop:Vdot}, the compositions $M(\xN(t))$ and $(-m(\xN(t)))$ are differentiable at almost all $t \in [0, t_1)$. 
In addition, by Proposition \ref{prop:Vdot} we have $\frac{d}{dt} M(\xN) \in \Lt_{G} M(\xN)$ and $\frac{d}{dt} (-m(\xN)) \in \Lt_{G} (-m(\xN))$ at almost all $t \in [0, t_1)$, where $\Lt_{G} M(\xN)$ and $\Lt_{G} (-m(\xN))$ represents the set-valued Lie derivatives of $M(\xN)$ and $(-m(\xN))$, respectively. 
The next part of the proof focuses on characterizing $\Lt_{G} M(\xN)$ and $\Lt_{G} (-m(\xN))$, from which we derive the range of possible values for $\frac{d}{dt} M(\xN)$ and $\frac{d}{dt} m(\xN)$.

We first consider $\Lt_{G} M(\xN)$. By definition, 
\begin{align}
\Lt_{G} M(\xN) = &\{a \in \R : \exists  v \in G[\fN](\xN) \text{ such that } \nonumber\\ 
& \hspace{2em} z^T v = a\ \forall  z \in \partial M(\xN)\}
\end{align}
Define $E_M$ as a matrix with columns $ e^i$ such that $\N_i \in S_M$.\footnote{Recall that $\N_i$ is defined immediately after Eq. \eqref{eq:normalvector}.} By the definition of $\partial M(\xN)$ from Lemma \ref{lem:partialms}, each $z \in \partial M(\xN)$ can be written as the convex combination $z = E \theta$, where $\theta \in \R^{|S_M|}$, $\theta \ggeq \bm 0$ and $\bm 1^T \theta = 1$. 
It therefore holds that $a \in \Lt_{G} M(\xN)$ if and only if there exists a $v \in G[\fN](\xN)$ such that $z^T v = ( \theta^T E_M^T)v = \theta^T (E_M^T v) = a$ for all $ \theta \ggeq \bm 0$, $\bm 1^T  \theta = 1$. Lemma \ref{lem:convex} in the Appendix proves that this holds if and only if $E_M^T  v = a \bm 1$. Recall that $E_M$ is composed of the columns $e^i$ such that $\N_i \in S_M$. By the form of $G[\fN](\xN)$, choosing any $ v \in G[\fN](\xN)$ with $v_i = a \in [-\alpha,\alpha]$ for all $i$ such that $\N_i \in S_M$ yields $E_M^T v = a \bm 1$, and therefore $\frac{d}{dt} M(\xN) \in \Lt_{G} M(\xN) = [-\alpha,\alpha]$.

We can further restrict the range of values for $\frac{d}{dt} M(\xN)$ to the range $[-\alpha,0]$ by considering the form of \eqref{eq:uFTRC}. 
We prove this by contradiction. Suppose there exists a $t \geq 0$ such that $\frac{d}{dt} M(\xN(t)) > 0$. This implies that there exists $t \geq 0$ and $\N_{i'} \in S_M(t)$ such that $u_{\N_{i'}}(t) = \alpha\, \sign\pth{\sum_{\J_{\N_{i'}} \backslash \Rc_{\N_{i'}}[t]} g(x_j^{\N_{i'}}(t)) - g(x_{\N_{i'}}(t)) } > 0$. 
However, for all $\N_i \in S_M$ all normal in-neighbors $j \in \V_i(t)$ have state values less than or equal to $x_{\N_i}(t)$ by the definition of $S_M$. Since $g(\cdot)$ is strictly increasing, we have $g(x_j^{\N_i}) - g(x_{\N_i}) \leq 0$ for all normal in-neighbors $j \in \V_{\N_i} \cap \N$. In addition, since $\A$ is $F$-local, any adversarial signals satisfying $g(x_k^{\N_i}(t)) > g(x_{\N_i}(t))$ for $k \in (\V_{\N_i} \cap \A)$ are filtered out by Algorithm \ref{alg:FTRC}. Therefore we must have $\sum_{\J_{\N_i} \backslash \Rc_{\N_i}[t]} g(x_j^{\N_i}(t)) - g(x_{\N_i}(t)) \leq 0$ for all ${\N_i} \in S_M$, which implies that $u_{{\N_i}}(t) = (\alpha) \sign\pth{\sum_{\J_{\N_i} \backslash \Rc_{\N_i}[t]} g(x_j^{\N_i}(t)) - g(x_{\N_i}(t)) } \leq 0$ for all ${\N_i} \in S_M$. This contradicts the assumption that there exists an ${\N_{i'}} \in S_M$ with $u_{\N_{i'}}(t) > 0$. Therefore $\frac{d}{dt} M(\xN) \leq 0$ wherever it exists, which yields $\frac{d}{dt} M(\xN) \in [-\alpha, 0]$.

The preceding logic can be repeated to demonstrate that {\rv $\frac{d}{dt} (-m(\xN)) \in [-\alpha,0]$ wherever this derivative exists, from which we can conclude that $\frac{d}{dt} m(\xN) \in [0,\alpha]$.}

\subsection{Proof of Theorem 4}
\label{pf:Theorem4}

{\rv We will need the following Lemma for the proof of Theorem 4.}
 
\begin{lemma}
\label{eq:samederivatives}
Under the conditions of Theorem \ref{thm:derivatives}, if $\frac{d}{dt} M(\xN(t))$ exists at $t \geq 0$ then $u_{i_1}(t) = u_{i_2}(t)$ for all $i_1, i_2 \in S_M$. Similarly, if $\frac{d}{dt} m(\xN(t))$ exists at $t \geq 0$, then $u_{j_1}(t) = u_{j_2}(t)$ for all $j_1, j_2 \in S_m$.
\end{lemma}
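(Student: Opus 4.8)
The plan is to combine the explicit form of the generalized gradient $\partial M(\xN)$ from Lemma \ref{lem:partialms} with the simplex argument of Lemma \ref{lem:convex}, using the regularity of $M(\cdot)$ (Lemma \ref{lem:regLipcont}) to pin down the \emph{actual} velocity $\dot{\xN}(t)$ against every subgradient simultaneously. The crucial point is that for a locally Lipschitz, regular function, whenever the composition $t \mapsto M(\xN(t))$ is (two-sided) differentiable at $t$, the genuine velocity $\dot{\xN}(t)$ satisfies $z^T \dot{\xN}(t) = \frac{d}{dt} M(\xN(t))$ for \emph{every} $z \in \partial M(\xN(t))$, and not merely for some compatible direction in $G[\fN](\xN(t))$. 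This is the chain-rule property underlying Proposition \ref{prop:Vdot}, and it is what turns the scalar existence of the derivative into a simultaneous constraint on all subgradients.

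Concretely, I would fix a time $t \geq 0$ at which $\frac{d}{dt} M(\xN(t))$ exists and write $a := \frac{d}{dt} M(\xN(t))$. By the property above, $z^T \dot{\xN}(t) = a$ for all $z \in \partial M(\xN(t))$. Substituting the form of $\partial M$ from Lemma \ref{lem:partialms}, every such $z$ equals $E_M \theta$ with $\theta \ggeq \bm 0$, $\bm 1^T \theta = 1$, where $E_M$ is the matrix of columns $e^i$ with $\N_i \in S_M$ introduced in the proof of Theorem \ref{thm:derivatives}. The identity then reads $\theta^T \pth{E_M^T \dot{\xN}(t)} = a$ for all $\theta$ in the probability simplex, so Lemma \ref{lem:convex} applied with $q = E_M^T \dot{\xN}(t)$ yields $E_M^T \dot{\xN}(t) = a \bm 1$. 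Since the $i$th component of $\dot{\xN}(t)$ is exactly $u_{\N_i}(t)$ by the dynamics \eqref{eq:totalsystem}, and $E_M$ selects precisely the coordinates indexed by $S_M$, this says $u_{\N_i}(t) = a$ for every $\N_i \in S_M$. Hence $u_{i_1}(t) = u_{i_2}(t)$ for all $i_1, i_2 \in S_M$, as claimed. The argument for $S_m$ is entirely symmetric: I would apply the same reasoning to $-m(\cdot)$, which is regular and locally Lipschitz by Lemma \ref{lem:regLipcont}, use $\partial(-m(\xN)) = -\partial m(\xN)$ together with the form of $\partial m$ from Lemma \ref{lem:partialms}, and invoke Lemma \ref{lem:convex} once more to conclude that $u_{\N_j}(t)$ is constant over $\N_j \in S_m$ whenever $\frac{d}{dt} m(\xN(t))$ exists.

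I expect the main obstacle to be the justification of the uniform identity $z^T \dot{\xN}(t) = a$ holding for \emph{all} $z \in \partial M(\xN(t))$ rather than for a single witnessing $v \in G[\fN](\xN(t))$. The latter is all that Proposition \ref{prop:Vdot} and the set-valued Lie derivative directly supply, but it is insufficient here because the conclusion concerns the genuine control inputs $u_{\N_i}(t) = (\dot{\xN}(t))_i$. The resolution is that regularity forces the right-hand directional derivative $\max_{z \in \partial M} z^T \dot{\xN}(t)$ and the left-hand counterpart $\min_{z \in \partial M} z^T \dot{\xN}(t)$ to coincide precisely when the two-sided derivative exists, which collapses $z^T \dot{\xN}(t)$ to a single value over the whole generalized gradient. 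Once this is in place, the remainder is the purely algebraic simplex step of Lemma \ref{lem:convex}, which is routine.
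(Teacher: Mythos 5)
Your proof is correct, but it takes a genuinely different route from the paper's. The paper argues by contraposition with an elementary observation that never touches generalized gradients: if two agents $i_1, i_2 \in S_M$ had $u_{i_1}(t) \neq u_{i_2}(t)$, then since $x_{i_1}(t) = x_{i_2}(t) = M(\xN(t))$, the pointwise maximum would have right-hand derivative at least the larger of the two velocities and left-hand derivative at most the smaller, so $\frac{d}{dt}M(\xN(t))$ could not exist. You instead work forward through the nonsmooth machinery the paper has already set up: regularity of $M(\cdot)$ (Lemma \ref{lem:regLipcont}) upgrades the set-valued statement of Proposition \ref{prop:Vdot} to the pointwise identity that $z^T \dot{\xN}(t)$ takes a single common value, equal to $\frac{d}{dt}M(\xN(t))$, over \emph{all} $z \in \partial M(\xN(t))$, and then the explicit gradient structure of Lemma \ref{lem:partialms} together with the simplex argument of Lemma \ref{lem:convex} converts this into equality of the entries of $\dot{\xN}(t)$ indexed by $S_M$. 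Your key technical point---that Proposition \ref{prop:Vdot} as stated only supplies one witnessing $v \in G[\fN](\xN(t))$, and that regularity is what forces the identity for every subgradient by matching the right-hand directional derivative $\max_{z \in \partial M} z^T\dot{\xN}(t)$ with the left-hand one $\min_{z \in \partial M} z^T\dot{\xN}(t)$---is exactly right, and is indeed the property established inside the standard proof of Proposition \ref{prop:Vdot}. What your route buys is that it makes explicit a reusable chain-rule fact and slots cleanly into the framework of Lemmas \ref{lem:regLipcont}--\ref{lem:partialms}; what the paper's route buys is brevity and independence from that machinery. Two minor remarks: (i) your final invocation of Lemma \ref{lem:convex} is heavier than needed, since the vertices $e^i$ with $\N_i \in S_M$ themselves belong to $\partial M(\xN(t))$, so constancy of $z^T\dot{\xN}(t)$ over the generalized gradient already yields the conclusion by evaluating at those vertices; (ii) like the paper's own proof, you implicitly use that $\dot{\xN}(t)$ exists and equals the vector of inputs at the particular $t$ in question, which for Caratheodory solutions of \eqref{eq:realdiffincl} holds only almost everywhere---a harmless caveat shared by both arguments, since the lemma is only ever applied at such points.
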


\begin{proof}
We prove the contrapositive. If at some $t \geq 0$ there exists $i_1, i_2 \in S_M$ such that $u_{i_1}(t) \neq u_{i_2}(t)$, then by \eqref{eq:dynamics} $\dot{x}_{i_1}(t) \neq \dot{x}_{i_2}(t)$. Since $M(\xN(t))$ is the pointwise maximum $\max_{i \in \N} x_i(t)$ and $x_{i_1}(t) = x_{j_1}(t) = M(\xN(t))$ by definition of $S_M$, the derivative $\frac{d}{dt} M(\xN(t))$ is therefore undefined at $t$. Similar arguments demonstrate the same result for $\frac{d}{dt} m(\xN(t))$.
\end{proof} 

{\rv We now give the proof of Theorem 4.}
Where possible, we abbreviate $\xN(t)$ to $\xN$ for brevity.
By definition, $V(\xN(t)) = M(\xN(t)) - m(\xN(t))$ which implies $\frac{d}{dt} V(\xN(t)) = \frac{d}{dt} M(\xN(t)) - \frac{d}{dt} m(\xN(t))$. Since $\frac{d}{dt} M(\xN(t))$ and $\frac{d}{dt} m(\xN(t))$ exist at almost all $t \in [0,t_1)$, $\frac{d}{dt} V(\xN(t))$ exists at almost all $t \in [0,t_1)$.

Next, we show that for all $\xN \notin \text{span}(\bm 1)$, there exists an agent $i \in (S_M \cup S_m)$ such that either $u_i(t) = -\alpha$ or $u_i(t) = \alpha$. Observe that $\xN \notin \text{span}(\bm 1)$ implies that $S_M$ and $S_m$ are nonempty and disjoint. By the definition of $(2F+1)$-robustness (Definition \ref{def:rrobust}), at least one of the sets $S_M$, $S_m$ is $(2F+1)$-reachable. Without loss of generality, suppose $S_M$ is $(2F+1)$-reachable. Then there exists $i \in S_M$ with $|\V_i \backslash S_M| \geq 2F+1$. By the FTRC-P, agent $i$ will filter out at most $2F$ values. Since $i \in S_M$, any normal values received by $i$ will be less than or equal to $g(x_i(t))$. Since $\A$ is $F$-local, any adversarial values greater than $g(x_i(t))$ will be filtered out as per the FTRC-P. This implies that agent $i$ will \emph{not} filter out at least one value $g(x_j^i(t)) < g(x_i(t))$, and that $\sum_{\J_i \backslash \Rc_i[t]} g(x_j^i(t)) - g(x_i(t)) < 0$. Therefore $u_i(t) = -\alpha$. Similar arguments can be used to show that if $S_m$ is $(2F+1)$-reachable, there exists $i \in S_m$ with $u_i(t) = \alpha$. 

Consider any $t\geq t_0$ such that $\xN(t) \notin \text{span}(\bm 1)$ and $\frac{d}{dt} V(\xN)$ exists. The existence of $\frac{d}{dt} V(\xN(t))$ implies that both $\frac{d}{dt} M(\xN)$ and $\frac{d}{dt} m(\xN)$ exist. Since $\xN(t) \notin \text{span}(\bm 1)$, by prior arguments there either exists a $i_M \in S_M$ with $u_{i_M}(t) = -\alpha$ or an $i_m \in S_m$ with $u_{i_m}(t) = \alpha$. We consider each case separately.

\emph{Case 1:} Suppose there there exists an $i_M$ with $u_{i_M}(t) = -\alpha$. Recall that we are considering any $t \geq 0$ such that $\xN(t) \notin \text{span}(\bm 1)$ and $\frac{d}{dt} V(\xN)$ exists, implying that $\frac{d}{dt} M(\xN)$ exists. Since $\frac{d}{dt} M(\xN)$ exists at $t$, then by Lemma \ref{eq:samederivatives} we have $u_j(t) = -\alpha$ at $t$ for all $j \in S_M$. This implies that $\frac{d}{dt} M(\xN) = -\alpha$. Since $\frac{d}{dt} m(\xN)$ also exists at our chosen $t$ and $m(\xN) \in [0,\alpha]$, we have $\frac{d}{dt}V(\xN) \leq -\alpha < 0$.

\emph{Case 2:} Suppose there there exists an $i_m$ with $u_{i_m}(t) = \alpha$ Since $\frac{d}{dt} m(\xN)$ exists at our choice of $t$, then by Lemma \ref{eq:samederivatives} we have $u_j(t) = \alpha$ at $t$ for all $j \in S_M$. This implies that $\frac{d}{dt} m(\xN) = \alpha$. Since $\frac{d}{dt} M(\xN)$ also exists at our chosen $t$ and $M(\xN) \in [-\alpha,0]$, we have $\frac{d}{dt}V(\xN) \leq -\alpha < 0$.

Since in each case we have $\frac{d}{dt} V(\xN) \leq -\alpha < 0$, for all $\xN(t) \notin \text{span}(\bm 1)$ the equation \eqref{eq:Vless} holds at almost all $t \in [0,t_1)$.

\subsection{Proof of Theorem 5}
\label{pf:Theorem5}

By Theorem \ref{thm:derivatives}, all normal agents remain within the invariant set $P(0)$ defined in \eqref{eq:Pset}, satisfying condition (i) of FTRC. By Theorem \ref{thm:Vless}, condition (ii) of FTRC is satisfied. 
To show that condition (iii) of FTRC is satisfied, observe that by Lemma \ref{lem:regLipcont} $V(\cdot)$ is locally Lipschitz on $\RN$. Since Caratheodory solutions $\xN(t)$ of \eqref{eq:realdiffincl} are absolutely continuous, the composition $V(\xN(t))$ is therefore absolutely continuous \cite[Appendix B]{franceschelli2017finite}. 
By Lemma \ref{lem:formofG} $G[f](\xN)$ satisfies the hypotheses of Proposition \ref{prop:S2} for all $\xN \in \RN$ and for all $t \geq 0$, implying that these hypotheses are satisfied for the compact set $Q = \cvxc(P(0) + B(0,\eps))$ for some $\eps > 0$ (where addition is in terms of the Minkowski sum). Since $P(0)$ is an invariant set and $P(0)$ does not intersect the boundary of $Q$, no solution $\xN(t)$ will reach the boundary of $Q$ for all $t \geq 0$. 
Consider any domain $D(t_1') = [-\delta,t_1'] \times Q$ for $\delta,t_1' > 0$. Each domain $D(t_1')$ is therefore compact.
By Theorem \ref{thm:Filippov} this implies that all solutions $\xN(t)$ to \eqref{eq:realdiffincl} exist on $t \in [0,t_1')$ for any $t_1' > 0$, which implies that all solutions $\xN(t)$ to \eqref{eq:realdiffincl} exist on $t \in [0,\infty)$.
By Theorems \ref{thm:finitetime} and \ref{thm:Vless} $V(\xN(t))$ converges to $\text{span}(\bm 1)$ in finite time, implying that $\xN(t)$ reaches consensus in finite time and condition (iii) of FTRC is satisfied.
{\rv
Since by Theorem \ref{thm:Vless} we have $\frac{d}{dt}(V(\xN(t))) \leq -\alpha$ at almost all $t \in [0,\infty)$, the time of convergence satisfies $T(\xN(0)) = \frac{1}{\alpha}V(\xN(0))$. 
}

{\rv
\subsection{Discussion of Assumption \ref{assume:measurable}}
\label{sec:discussionassume}

In this section we discuss further the implications of Assumption \ref{assume:measurable}. 
Specifically, we consider the possibility of the adversaries sending signals which are not Lebesgue measurable.
To give a simple example of non-Lebesgue-measurable function, the indicator function $\bm 1_S : \R \rarr \R$ defined as
\begin{align*}
\bm 1_S(x) = \begin{cases}
1 & \text{if } x \in S \\
0 & \text{otherwise}
\end{cases}
\end{align*}
is not Lebesgue measurable if the subset $S \subset \R$ is not Lebesgue measurable.
Note that by definition of measurability, the existence of a non-Lebesgue-measurable mapping from $\R$ to $\R$ implies the existence of a subset of $\R$ which is not Lebesgue measurable. Contrapositively, the nonexistence of non-Lebesgue-measurable subsets of $\R$ implies that all functions mapping $\R$ to $\R$ are Lebesgue measurable.

There are at least two schools of thought on this point. If one assumes that the axiom of choice holds, then the axiom of choice can be used to demonstrate the existence of subsets of $\R$ which are not Lebesgue measurable (e.g. Vitali sets \cite{cichon1993sets}). 
However, the Solovay model \cite{solovay1970model} demonstrated that the existence of a non-Lebesgue-measurable subset of $\R$ cannot be proven without using the axiom of choice. Under the Solovay model, which does not assume the axiom of choice but instead assumes the existence of an inaccessible cardinal, \emph{all} subsets of $\R$ are Lebesgue measurable. 

The question of whether the adversaries can send non-Lebesgue-measurable signals therefore hinges upon which assumptions are made about the axiom of choice and the existence of an inaccessible cardinal. A full discussion of the merits of each approach is completely beyond the scope of this paper, and so we conclude by simply asserting that the results of this paper hold under Assumption \ref{assume:measurable}, i.e. when all adversarial signals are Lebesgue measurable.
}

\bibliographystyle{IEEEtran}
\bibliography{ACC2020}

\end{document}